\begin{document}

\newenvironment{proof}{\paragraph{Proof:}}{\hfill}
\newcommand{\ie}{\textit{i}.\textit{e}., }
\newenvironment{proof_thm}{\paragraph{Proof of Theorem:}}{\hfill}


%

%



\twocolumn[
\aistatstitle{Unsupervised Neural Universal Denoiser for Finite-Input General-Output Noisy Channel}
\aistatsauthor{ Tae-Eon Park and Taesup Moon}
\aistatsaddress{Department of Electrical and Computer Engineering\\ Sungkyunkwan University (SKKU), Suwon, Korea 16419 \\ \texttt{\{pte1236, tsmoon\}@skku.edu}
 } ]

\begin{abstract}
We devise a novel neural network-based universal denoiser for the finite-input, general-output (FIGO) channel. Based on the assumption of known noisy channel densities, which is realistic in many practical scenarios, we train the network such that it can denoise as well as the \emph{best} sliding window denoiser for \emph{any} given underlying clean source data. Our algorithm, dubbed as Generalized CUDE (Gen-CUDE), enjoys several desirable properties; it can be trained in an unsupervised manner (solely based on the noisy observation data), has much smaller computational complexity compared to the previously developed universal denoiser for the same setting, and has much tighter upper bound on the denoising performance, which is obtained by a theoretical analysis. In our experiments, we show such tighter upper bound is also realized in practice by showing that Gen-CUDE achieves much better denoising results compared to other strong baselines for both synthetic and real underlying clean sequences.


\end{abstract}

\section{Introduction}  
Denoising is a ubiquitous problem that lies at the heart of a wide range of fields such as statistics, engineering, bioinformatics, and machine learning. While numerous approaches have been undertaken, many of them focused on the case for which both the input and output of a noisy channel are continuous-valued \citep{DonJoh95,ElaAha06, bm3d}. In addition, discrete denoising, in which the input and output of the channel take their values in some finite set, have also been considered more recently \citep{Dude, moon2016neural, sdude}.

In this paper, we focus on the hybrid case, namely, the setting in which the underlying clean input source is finite-valued, while the noisy channel output can be continuous-valued. Such scenario naturally occurs in several applications; for example, in DNA sequencing, the finite-valued nucleotides (A,C,G,T) are typically sequenced through observing the continuous-valued light intensities, also known as \emph{flowgrams}. Other examples can be found in digital communication, in which the finite-valued codewords are modulated, e.g., QAM, and sent via a Gaussian channel, as well as in speech recognition, in which the finite-valued phonemes are observed as continuous-valued speech waveforms. In all of above examples, the goal of denoising 	is to recover the underlying finite-valued clean input source from the continuous-valued noisy observations.

There are two standard approaches for tackling above problem: supervised learning and Bayesian learning approaches. The supervised learning collects many clean-noisy paired data and learn a parametric model, e.g., neural networks, that maps noisy to clean data. While simple and straightforward, applying supervised learning often becomes challenging for the applications in which collecting underlying clean data is unrealistic. For such \emph{unsupervised} setting, a common practice is to apply the Bayesian learning framework. That is, assume the existence of stochastic models on the source and noisy channel, then pursue the optimum estimation with respect to the learned joint distribution. Such approach makes sense for the case in which precisely modeling or designing the clean source is possible, \emph{e.g.}, in digital communication, but limitations can also arise when the assumed stochastic model fails to accurately reflect the real data distribution.



As a third alternative, the so-called \emph{universal} approach has been proposed in \citep{Dude,dembo2005universal}. Namely, while remaining in the unsupervised setting as in the Bayesian learning, the approach makes \emph{no} assumption on the source stochastic model and instead applies the competitive analysis framework; namely, it focuses on the class of \emph{sliding window denoisers} and aims to asymptotically achieve the performance of the best sliding window denoiser for \emph{all} possible sources, solely based on the knowledge on the noisy channel model. The pioneering work, \citep{Dude}, devised Discrete Universal DEnoiser (DUDE) algorithm, which handled the finite-input, finite-output (FIFO) setting, and \citep{dembo2005universal} extended it to the case of finite-input, general-output (FIGO) channels, the setting on which this paper focuses. 

While above both universal schemes enjoyed strong theoretical performance guarantees, they both had critical algorithmic limitations. Namely, the original DUDE becomes very sensitive to the selection of a hyperparameter, i.e., the window size $k$, and the generalized scheme for FIGO channel additionally suffered from the prohibitive computational complexity. Recently, \citep{moon2016neural,ryu2018conditional} employed neural networks in place of a counting vector used in DUDE and showed their schemes can significantly improve the denoising performance and robustness of DUDE. In this paper, we aim to extend the generalized scheme of \citep{dembo2005universal} for the FIGO channel toward the direction of \citep{moon2016neural,ryu2018conditional}, i.e., utilize neural networks to achieve much faster and better performance. Such extension is not straightforward, as we argue in the later sections, due to the critical difference that the channel has the continuous-valued outputs.





Our contribution is threefold:
\begin{compactitem}
    \item Algorithmic: We develop a new neural network-based denoising algorithm, dubbed as Generalized CUDE (Gen-CUDE), which can run orders of magnitude faster than the previous state-of-the-art in \citep{dembo2005universal}. 
    \item Theoretical: We give a rigorous theoretical analyses on the performance of our method and obtain a much tighter upper bound on the average loss compared to that of \citep{dembo2005universal}.
    \item Experimental: We compare our algorithm on denoising both the simulated and real source data and show the superb performance compared to other strong baselines.
\end{compactitem}

\section{Notations and Problem Setting}\label{sec:notation}


We follow \citep{dembo2005universal} but give more succinct notations. Throughout this paper, we will generally denote a sequence ($n$-tuple) as, e.g., $a^n=(a_1,\ldots,a_n)$, and $a_i^j$ referes to the subsequence $(a_i,\ldots,a_j)$. We denote the clean, underlying source data as $x^n$ and assume each component $x_i$ takes a value in some finite set $\mathcal{A}=\{0,\ldots,M-1\}$. The lowercase letters are used to denote the \emph{individual} sequences or the realization of a random sequence. We assume the noisy channel, denoted as $\mathcal{C}$, is \emph{memoryless} and is given by the set $\{f_a\}_{a\in\mcA}$, in which $f_a$ denoting the density with respect to the Lebesgue measure\footnote{We assume such density always exists for concreteness.} associated with the channel output distribution for an input symbol $a$. Following states the mild assumption that we make on $\{f_a\}_{a\in\mcA}$ throughout the paper. 

\begin{assumption}\label{asmt1}
The set of densities $\{f_a\}_{a\in\mcA}$ is a set of linearly independent functions in $L_1(\mu)$.
\end{assumption}
Given above channel $\mathcal{C}$, the noise-corrupted version of the source sequence $x^n$ is denoted as $Y^n=(Y_1,\ldots,Y_n)$. Note we used the uppercase letter to emphasize the randomness in the noisy observation. 
Now, consider a measurable quantizer $Q:\mathbb{R}\rightarrow \mcA$, which quantizes the channel output to symbols in $\mcA$, and the induced channel $\Pib$, a $M\times M$ channel transition matrix induced by $Q$ and $\{f_a\}$. We denote the quantized output of $Y^n$ by $Z^n$, and the $(x,z)$-th element of $\Pib$ can be computed as
\be
\Pib(x,z)=\int_{y:Q(y)=z}f_x(y)dy.
\label{eq:ind_pi}
\ee
Note Assumption \ref{asmt1} ensures that $\Pib$ is an invertible matrix. Moreover, we denote $Z_i\triangleq Q(Y_i)$ as the quantized version of $Y_i$.

Given the entire \emph{continuous-valued} noisy observation $Y^n$, the denoiser reconstructs the original \emph{discrete} input $x^n$ with $\hat{X}^n=(\hat{X}_1(Y^n),\ldots,\hat{X}_n(Y^n))$, where each reconstructed symbol $\hat{X}_i(Y^n)$ takes its value in $\mcA$. The fidelity of the denoising is measured by the average loss
\be
L(x^n,\Xhat^n(Y^n))=\frac{1}{n}\sum_{i=1}^n\Lb\Big(x_i,\Xhat_i(Y^n)\Big),\label{eq:avg-loss}
\ee
in which $\Lb\in\mathbb{R}^{M\times M}$ is a per-symbol bounded loss matrix. 
Moreover, we denote $\Lb_{\hat{x}}$ as the $\hat{x}$-th column of $\Lb$ and $\Lambda_{\max}=\max_{x,\hat{x}}\mathbf{\Lambda}(x,\hat{x})$.

Then, for a probability vector $\Pb\in\Delta^M$, the \emph{Bayes envelope}, $U(\Pb)$, is defined to be 
\begin{equation}
   U(\Pb)\triangleq \min_{\hat{x}\in\mcA} \sum_{x\in\mcA} \Lb(x,\hat{x})\Pb(x),
   \label{eq:bayes_e}
\end{equation}
which, in words, stands for the minimum achievable expected loss in estimating the source symbol that is distributed according to $\Pb$. The argument that achieves (\ref{eq:bayes_e}) is denoted by $\mathcal{B}(\mathbf{P})$, the \emph{Bayes response} with respect to $\mathbf{P}$. Furthermore, in the later sections, we extend the notion of Bayes response by using $\mathbf{P}$ that is not necessarily a probability vector.  

The $k$-th order sliding-window denoisers are the denoisers that are defined by the time-invariant mappings $g_k:\mathbb{R}^{2k+1}\rightarrow \mcA$. That is, $\hat{X}_i(Y^n)=g_k(Y_{i-k}^{i+k})$. We also denote the tuple $ \mathbf{Y}^{(k)}_{-i}\triangleq(Y_{i-k}^{i-1},Y_{i+1}^{i+k})$ as the $k$-th order context around the noisy symbol $Y_i$.





\section{Related Work}

\subsection{DUDE, Neural DUDE and CUDE}\label{subsec:dude}

A straightforward baseline for the FIGO channel setting is to simply quantize the continuous-valued output and apply the discrete denoising algorithm to estimate the underlying clean source. While such scheme is clearly suboptimal since it significantly discards the information observed in $Y^n$, we briefly review the previous work on discrete denoising so that we can build intuitions for devising our algorithm for the FIGO channel.


\noindent \textbf{DUDE} was devised by \citep{Dude} and is a two-pass, sliding-window denoiser for the FIFO setting. In discrete denoising, we denote $Z^n$ as the finite-valued noisy sequence, $\mathbf{Z}_{-i}^{(k)}\triangleq(Z_{i-k}^{i-1},Z_{i+1}^{i+k})$ as the $k$-th order context around $Z_i$, and $\mathbf{\Gamma}$ as the Discrete Memoryless Channel (DMC) transition matrix that induces the noisy sequence $Z^n$ from the clean $x^n$. Then, the reconstruction of DUDE at location $i$ is defined to be
\begin{eqnarray}
\hat{X}_{i}(\mathbf{Z}_{-i}^{(k)},Z_i)= \arg\min_{\hat{x}\in\hat{\Xcal}}\hat{\mathbf{p}}_{\text{emp}}(\cdot|\mathbf{Z}_{-i}^{(k)})^\top\mathbf{\Gamma}^{\dagger}[\boldsymbol\Lambda_{\hat{x}}\odot\bm\gamma_{Z_i}],\label{eq:dude_rule}
\end{eqnarray}
in which $\mathbf{\Gamma}^{\dagger}$ is a Moore-Penrose pseudo-inverse of $\mathbf{\Gamma}$ (assuming $\mathbf{\Gamma}$ is full row-rank), $\bm\gamma_z$ is the $z$-th column of $\mathbf{\Gamma}$, and $\hat{\mathbf{p}}_{\text{emp}}(\cdot|\mathbf{Z}_{-i}^{(k)})\in\mathbb{R}^{|\mcZ|}$ is an empirical probability vector on $Z_i$ given the context $\mathbf{Z}_{-i}^{(k)}$, obtained from the entire noisy sequence $Z^n$. That is, for a $k$-th order double-sided context $\mathbf{Z}^{(k)}$, the $z$-th element of $\hat{\mathbf{p}}_{\text{emp}}(\cdot|\mathbf{Z}_{-i}^{(k)})$ becomes
\be
\hat{\mathbf{p}}_{\text{emp}}(z|\mathbf{Z}^{(k)})=\frac{|\{j:\mathbf{Z}_{-j}^{(k)}=\mathbf{Z}^{(k)}, Z_j=z\}|}{|\{j:\mathbf{Z}_{-j}^{(k)}=\mathbf{Z}^{(k)}\}|}.\label{eq:emp}
\ee
The main intuition for obtaining (\ref{eq:dude_rule}) is to show that the true posterior distribution can be approximated by using (\ref{eq:emp}) and inverting the DMC channel, $\mathbf{\Gamma}$. That is, the following approximation 
\be
p(x_i|Z_{i-k}^{i+k})\approx \big(\bm\gamma_{Z_i}\odot [\mathbf{\Gamma}^{\dagger\top}\hat{\mathbf{p}}_{\text{emp}}(\cdot|\mathbf{Z}_{-i}^{(k)})]\big)_{x_i} \label{eq:approx_emp}
\ee
holds with high probability with large $n$ \citep[Section IV.B]{Dude}. Then, for each location $i$, (\ref{eq:dude_rule}) is the $\mathcal{B}(\bm\gamma_{Z_i}\odot [\mathbf{\Gamma}^{\dagger\top}\hat{\mathbf{p}}_{\text{emp}}(\cdot|\mathbf{Z}_{-i}^{(k)})])$, the Bayes response with respect to the right-hand side of (\ref{eq:approx_emp}).
\citep{Dude} showed the DUDE rule, (\ref{eq:dude_rule}), can universally attain the denoising performance of the best $k$-th order sliding window denoiser for \emph{any} $x^n$. \vspace{.05in}



\noindent\textbf{Neural DUDE (N-DUDE)} was recently proposed by \citep{moon2016neural}, and it
identified that the limitation of DUDE follows from the empirical count step in (\ref{eq:emp}). Namely, the count happens totally separately for each context $\Cb$, even if the contexts can be very similar to each other. To that end, N-DUDE implements a \emph{single} neural network-based sliding-window denoiser such that the information among similar contexts can be shared through the network parameters. That is, N-DUDE defines $\mathbf{p}^k_{\texttt{N-DUDE}}(\mathbf{w},\cdot):\mathcal{Z}^{2k}\rightarrow\Delta^{|\mcS|}$, in which $\wb$ stands for the parameters in the network, and $\mcS$ is a set of single-symbol denoisers, $s:\mcZ\rightarrow\mcA$, which map $\mcZ$ to $\mcA$. Thus, $\mathbf{p}^k_{\texttt{N-DUDE}}(\mathbf{w},\cdot)$ takes the context $\mathbf{Z}_{-i}^{(k)}$ and outputs a probability distribution on the single-symbol denoisers to apply to $Z_i$, for each $i$. Note for discrete denoising, $|\mcS|$ has to be finite, hence the network has the structure of a multi-class classification network.

To train the network parameters $\wb$, N-DUDE defines the objective function
\begin{align}
\mathcal{L}(\mathbf{w}, Z^n)\triangleq&\frac{1}{n}\sum_{i=1}^n\mathbb{C.E}\Big(\mathbf{L}_{\text{new}}^\top\mathds{1}_{Z_i}, \mathbf{p}^k_{\texttt{N-DUDE}}(\mathbf{w},\mathbf{C}_i)\Big),\nonumber
\end{align}
in which $\mathbb{C.E}(\mathbf{g},\mathbf{p})$ stands for the (unnormalized) cross-entropy, and $\mathbf{L}_{\text{new}}^\top\mathds{1}_{Z_i}$ is the \emph{pseudo-label} vector for the $i$-th location, calculated from the unbiased estimate of the true expected loss which can be computed with $\Lb$, $\mathbf{\Gamma}$, and $Z^n$ (more details are in \citep{moon2016neural}). Note the dependency of the objective function on $Z^n$ is highlighted, hence, the training of $\wb$ is done in an unsupervised manner together with the knowledge of the channel.

Once the objective function is minimized via stochastic gradient descent, the converged parameter is denoted as $\wb^\star$. 
Then, the single-letter mapping defined by N-DUDE for the context $\mathbf{Z}_{-i}^{(k)}$ is expressed as 
$s_{k,\texttt{N-DUDE}}(\mathbf{Z}_{-i}^{(k)},\cdot)=\arg\max_{s\in\mcS}\mathbf{p}^k_{\texttt{N-DUDE}}(\wb^\star,\mathbf{Z}_{-i}^{(k)})_s,$
and the reconstruction at location $i$ becomes 
\be
\hat{X}_{i,\texttt{N-DUDE}}(\mathbf{Z}_{-i}^{(k)},Z_i)=s_{k,\texttt{N-DUDE}}(\mathbf{Z}_{-i}^{(k)},Z_i).\label{eq:n_dude_recons}
\ee
\citep{moon2016neural} shows N-DUDE significantly outperforms DUDE and is more robust with respect to $k$. \vspace{-.1in}


\noindent\textbf{CUDE} was proposed by \citep{ryu2018conditional} following-up on N-DUDE, which took an alternative and simpler approach of using neural network to extend DUDE. Namely, instead of using the empirical distribution in (\ref{eq:emp}), CUDE learns a network $\mathbf{p}^k_{\texttt{CUDE}}(\wb,\cdot):\mcZ^{2k}\rightarrow\Delta^{|\mcZ|}$, which takes the context $\mathbf{Z}_{-i}^{(k)}$ as input and outputs a prediction for $Z_i$, by minimizing $\frac{1}{n}\sum_{i=1}^n\mathbb{C.E}(\mathds{1}_{Z_i}, \mathbf{p}^k_{\texttt{CUDE}}(\wb,\mathbf{Z}_{-i}^{(k)})).$ Thus, the network aims to directly learn the conditional distribution of $Z_i$ given its context $\mathbf{Z}_{-i}^{(k)}$.
Once the minimizer $\wb^*$ is obtained, CUDE then simply plugs in $\mathbf{p}^k_{\texttt{CUDE}}(\wb^*,\mathbf{Z}_{-i}^{(k)})$ in place of $\hat{\mathbf{p}}_{\text{emp}}(\cdot|\mathbf{Z}_{-i}^{(k)})$ in  (\ref{eq:dude_rule}). \citep{ryu2018conditional} shows that CUDE outperforms N-DUDE primarily due to the reduced output size of the neural network, \ie $|\mcZ|$ vs. $|\mcS|=|\mcA|^{|\mcZ|}$.\vspace{.1in}


\vspace{-.1in}

\subsection{Generalized DUDE for FIGO channel}

\citep{dembo2005universal} extended DUDE algorithm specifically for the FIGO channel case, and we refer to their scheme as Generalized DUDE (Gen-DUDE) from now on. The key challenge arises in the FIGO channel for applying the DUDE framework is that it becomes impossible to obtain an empirical distribution like (\ref{eq:emp}) based on counting for each context, because there are infinitely many possible contexts. 

Therefore, by denoting $\mathbf{P}(X_i | y_{i-k}^{i+k})\in\Delta^{M}$ as the conditional probability vector on $X_i$ given the $(2k+1)$-tuple $y_{i-k}^{i+k}$, Gen-DUDE first identifies that the denoising rule at location $i$ should be the Bayes response
\begin{align}
\hat{X}_i(y^n) =& \ \mathcal{B}(\mathbf{P}(X_i\mid y_{i-k}^{i+k}))= \ \mathcal{B}(\mathbf{P}(X_i,y_{i-k}^{i+k})), \label{eq:g-dude}
\end{align}
in which the second equality in (\ref{eq:g-dude}) follows from ignoring the normalization factor of $\mathbf{P}(X_i | y_{i-k}^{i+k})\in\Delta^{M}$. Note, $\mathbf{P}(X_i,y_{i-k}^{i+k})$ is not a probability vector, but the notion of Bayes response still holds. Now, the joint distribution can be expanded as 
\begin{align}
&p(X_i=a,y_{i-k}^{i+k}) =\sum_{u_{-k}^k:u_0=a} p(X_{i-k}^{i+k}=u_{-k}^k,y_{i-k}^{i+k})\nonumber\\
                       =& \sum_{u_{-k}^k:u_0=a} \underbrace{\Big[\prod_{j=-k}^k f_{u_{j}}(y_{i+j})\Big]}_{(a)} \underbrace{p(X_{i-k}^{i+k}=u_{-k}^k)}_{(b)}\label{eq:joint},
\end{align}
in which term (a) of (\ref{eq:joint}) follows from the memoryless assumption on the channel $\mathcal{C}$. Then, Gen-DUDE approximates term (b) of (\ref{eq:joint}), which is now the distribution on the finite-valued source $(2k+1)$-tuples, by computing the empirical distribution of the \emph{quantized} noisy sequence $Z^n$ and inverting the induced DMC matrix $\Pib$, both of which are defined in Section \ref{sec:notation}. Once the approximation for (\ref{eq:joint}) is done, the Gen-DUDE simply computes the Bayes response as in (\ref{eq:g-dude}) with the approximate joint probability vector. For more details, we refer to the paper \citep{dembo2005universal}.

The main critical drawback of Gen-DUDE is the computation required for approximating (\ref{eq:joint}). Namely, the summation in (\ref{eq:joint}) is over all possible $(2k+1)$-tuples of the source symbols, of which complexity grows exponentially with $k$. Therefore, the running time of the algorithm becomes totally impractical even for the modest alphabet sizes, e.g., 4 or 10, as shown in our experimental results in the later section. Moreover, such exponential dependency on $k$ also appears in the theoretical analyses of Gen-DUDE. That is, it is shown that the upper bound on the probability that the average loss of Gen-DUDE deviates from that of the best sliding-window denoiser is proportional to the doubly exponential term $C^{M^{2k+1}}$, which quickly becomes meaningless for, again, modest size of $M$ and $k$. Motivated by such limiations, we introduce neural networks to efficiently approximate $\mathbf{P}_{X_i,y_{i-k}^{i+k}}$ and compute the Bayes response to significantly improve the Gen-DUDE method.






\vspace{-.05in}
\section{Main Results}

\subsection{Intuition for Gen-CUDE}

As mentioned above, the Gen-DUDE suffers from high computational complexity due to the expansion given in (\ref{eq:joint}) that requires the summation over the exponentially many (in $k$) terms. The main reason for such expansion in \citep{dembo2005universal} was to utilize the tools of DUDE for approximating term (b) in (\ref{eq:joint}), which inevitably requires to enumerate all the $2k$-tuple terms. Hence, we instead try to directly approximate $\mathbf{P}(X_i|y_{i-k}^{i+k})$ using a neural network. 

Our algorithm is inspired by N-DUDE and CUDE, mentioned in Section \ref{subsec:dude}, which show much better traits compared to the original DUDE. However, we easily notice that the approach of N-DUDE cannot be applied to the FIGO channel case, because there will be infinitely many single-symbol denoisers $s:\mathcal{Y}\rightarrow \mcA$. 
Hence, the output layer of the network should perform some sort of regression, instead of the classification as in N-DUDE, but obtaining the pseudo-label for training in that case is far from being straightforward. Therefore, we take an inspiration from CUDE and develop our Generalized CUDE (Gen-CUDE). 


In order to build the core intuition for our method, first consider the quantized noisy sequence $Z^n$ and the induced DMC matrix $\Pib$ (defined in (\ref{eq:ind_pi})). 
That is, $Z_i=Q(Y_i)$ where $Q(\cdot)$ is the quantizer introduced in Section \ref{sec:notation}. 
Furthermore, denote $\mathbf{P}(X_0|y_{-k}^k)\in\Delta^M$ and $\mathbf{P}(Z_0|y_{-k}^k)\in\Delta^M$ as the conditional probability vectors of $X_0$ and $Z_0$ given a $(2k+1)$ tuple $y_{-k}^k$ that appear in the noisy observation $Y^n$. Also, let $\bm f_{X_0}(y_0)\in\mathbb{R}^M$ be the vector of density values of which $a$-th element is $f_{a}(y_0)$. We treat all the vectors as \emph{row} vectors. The following lemma builds the key motivation.

\begin{lemma} Given $y_{-k}^k$, the following holds.
\begin{equation}
\mathbf{P}(X_0|y_{-k}^k)\propto [\mathbf{P}(Z_0|\mathbf{y}_{-0}^{(k)}) \cdot \Pib^{-1}]\odot\bm f_{X_0}(y_0)\label{eq:lemma}
\end{equation}
Namely, we can compute $\mathbf{P}(X_0|y_{-k}^k)$ up to a normalization constant using the conditional distribution on $Z_0$, and the information on the channel $\mathcal{C}$.
\label{lemma_1}
\end{lemma}
\vspace{-.1in}
\emph{Proof:} We have the following chain of equalities for the conditional distribution $p(x_{0}|y_{-k}^{k})$:
\begin{align}
     p(x_{0}|y_{-k}^{k}) =& \frac{p(x_{0} , y_{-k}^{k})}{p(y_{-k}^{k})}  =\frac{p(x_{0} , \mathbf{y}_{-0}^{(k)})f_{x_0}(y_0)}{p(y_{-k}^{k})}\label{eq:equality1}\\
     =& p(x_{0} | \mathbf{y}_{-0}^{(k)})f_{x_0}(y_0)\cdot \frac{p( \mathbf{y}_{-0}^{(k)})}{p(y_{-k}^{k})},\label{eq:equality2}
\end{align}
in which the second equality of (\ref{eq:equality1}) follows from the memoryless property of the densities $\{f_a\}_{a\in\mcA}$ of $\mathcal{C}$, and $\mathbf{y}_{-0}^{(k)}$ stands for the double-sided context $(y_{-k}^{-1},y_1^k)$. Now, by denoting $z_0=Q(y_0)$ as the quantized version of $y_0$, we have the following relation. 
\begin{align}
    p(z_0|\mathbf{y}_{-0}^{(k)})=& \sum_{x_0}p(z_0|x_0, \mathbf{y}_{-0}^{(k)})p(x_0|\mathbf{y}_{-0}^{(k)})\nonumber\\
    =& \sum_{x_0}\Pib(x_0,z_0)p(x_0|\mathbf{y}_{-0}^{(k)}),\label{eq:memless_pi}
\end{align}
in which (\ref{eq:memless_pi}) follows from the channel $\mathcal{C}$ being memoryless and utilizing the notation of the induced DMC matrix, $\Pib$, defined in (\ref{eq:memless_pi}). Thus, following the row vector notations as mentioned above, we have
\begin{align}
    \mathbf{P}(Z_0|\mathbf{y}_{-0}^{(k)})=\mathbf{P}(X_0|\mathbf{y}_{-0}^{(k)})\cdot\Pib.\label{eq:identity}
\end{align}
By inverting $\Pib$ in (\ref{eq:identity}), and combining with (\ref{eq:equality2}) and dropping the term $\frac{p(\mathbf{y}_{-0}^{(k)})}{p(y_{-k}^{k})}$, we have the lemma. \qed

From the lemma, we can see that once we have accurate approximation of the conditional distribution $\mathbf{P}(Z_0|\mathbf{y}_{-0}^{(k)})$, then we can apply (\ref{eq:lemma}) and obtain
the Bayes response with respect to $[\mathbf{P}(Z_0|\mathbf{y}_{-0}^{(k)}) \cdot \Pib^{-1}]\odot\bm f_{X_0}(y_0)$. Now, following the spirit of CUDE, we utilize neural network to approximate $\mathbf{P}(Z_0|\mathbf{y}_{-0}^{(k)})$ from the observed data. We concretely describe our Gen-CUDE algorithm in the next subsection.



\subsection{Algorithm Description}
Inspired by (\ref{eq:lemma}) and CUDE, we try to use a single neural network to learn the $k$-th order sliding window denoiser. First of all, define $\mathbf{p}^{k}(\mathbf{w},\cdot) : \mathbb{R}^{2k} \rightarrow \Delta^M$ as a feed-forward neural network we utilize. With weight parameter $\mathbf{w}$, the network takes context $\mathbf{y}_{-i}^{(k)}$ as input and send out $\mathbf{P}(Z_0|\mathbf{y}_{-i}^{(k)})$ as output. To learn the parameter $\mathbf{w}$, we define the objective function as 
\begin{align}
\mathcal{L}_{\texttt{Gen-CUDE}}(\mathbf{w}, Y^n)\triangleq&\frac{1}{n-2k}\sum_{i=k}^{n-k}\mathbb{C.E}\Big(\mathds{1}_{Z_i}, \mathbf{p}^k(\mathbf{w},\mathbf{Y}_{-i}^{(k)})\Big).\nonumber
\label{eq:objective}
\end{align}
Namely, minimizing $\mathcal{L}_{\texttt{Gen-CUDE}}$ leads to training the network to predict the \emph{quantized} middle symbol $Z_i$ based on the continuous-valued context $\mathbf{Y}_{-i}^{(k)}$, hence, the network can maintain the multi-class classification structure with the ordinary softmax output layer. The minimization is done by the stochastic gradient descent-based optimization methods such as Adam \citep{kingma2014adam}. Once the minimization is done, we denote the converged weight vector as $\mathbf{w}^{*}$. Then, by motivated by Lemma \ref{lemma_1}, we define our \texttt{Gen-CUDE} denoiser as the Bayes response with respect to $[\mathbf{p}^k(\mathbf{w}^*,\mathbf{Y}_{-i}^{(k)}) \cdot \Pib^{-1}]\odot \mathbf{f}_{X_0}(Y_i)$ for each $i$. Following summarizes our algorithm.

%



\renewcommand{\algorithmicrequire}{\textbf{Input:}}
\renewcommand{\algorithmicensure}{\textbf{Output:}}
\begin{algorithm}
\caption{Gen-CUDE algorithm}
\begin{algorithmic}\label{alg1}
	\REQUIRE Noisy sequence $Y^n$, Context size $k$, $\mathcal{C}=\{f_a\}_{a\in\mcA}$, $\mathbf{\Lambda}$, Quantizer $Q(\cdot)$
    \ENSURE Denoised sequence $\hat{X}_{\text{NN}}^n=\{\hat{X}_{i,\text{NN}}(Y^n)\}_{i=1}^n$
    \STATE Obtain the quantized sequence $Z^n$ using $Q(\cdot)$
    \STATE Compute $\mathbf{\Pi}$ as (\ref{eq:ind_pi}) and initialize $\mathbf{p}^{k}(\mathbf{w},\cdot)$
    \STATE Obtain $\mathbf{w}^*$ minimizing $\mathcal{L}_{\texttt{Gen-CUDE}}(\mathbf{w}, Y^n)$ 
    \IF{$i=k+1,\ldots, n-k$}
        \STATE Compute $[\mathbf{p}^{k}(\mathbf{w}^*, \mathbf{Y}_{-i}^{(k)}) \cdot \mathbf{\Pi}^{-1}] \odot \bm f_{X_{i}}(Y_{i})$
        \STATE $\hat{X}_{i,\text{NN}}(y^n)= \mathcal{B}\big([\mathbf{p}^{k}(\mathbf{w}^*, \mathbf{Y}_{-i}^{(k)}) \cdot \mathbf{\Pi}^{-1}] \odot \bm f_{X_{i}}(Y_{i})\big)$
    \ELSE
         \STATE $\hat{X}_{i,\text{NN}}(Y^n)=Z_i$
    \ENDIF
    \STATE Obtain $\hat{X}_{\text{NN}}^n(Y^n)=\{\hat{X}_{i,\text{NN}}(Y^n)\}_{i=1}^n$
\end{algorithmic}
\end{algorithm}




\vspace{-.05in}

\subsection{Theoretical Analysis}
In this subsection, we give a theoretical analysis on Gen-CUDE, which follows similar steps as in \citep{dembo2005universal} but derives a much tighter upper bound on the average loss of Gen-CUDE. As a performance target for the competitive analysis, we define the minimum expected loss of $x^{n}$ for the $k$th-order sliding-window denoiser by \vspace{-.05in}
\begin{align}
     D_{x^n}^k = \min _{g_k} \mathbb{E}\Big[\frac{1}{n-2 k} \sum_{i=k+1}^{n-k} \Lb\Big(x_{i}, g_k\Big(Y_{i-k}^{i+k}\Big)\Big)\Big].
\end{align}
Now, we introduce a regularity assumption to carry out analysis for the performance bound. 

\begin{assumption}\label{asm2}
    Consider the network parameter $\mathbf{w}^*$ learned by minimizing $\mathcal{L}_{\texttt{Gen-CUDE}}$. Then, we assume there exists a sufficiently small $\epsilon'>0$ such that 
    $$
    \Big\|\mathbf{P}(Z_{0} | \mathbf{y}_{-0}^{(k)})-\mathbf{p}^k(\mathbf{w}^*,\mathbf{y}_{-0}^{(k)})\Big\|_{1}
    \leq\epsilon'
    $$
    holds for all contexts $\mathbf{y}_{-0}^{(k)}\in\mathbb{R}^{2k}$.
\label{assumption_3}
\end{assumption}
Assumption \ref{asm2} is based on the universal approximation theorem \citep{Cybenko89,HorStiWhi89}, which ensures that there always exists a neural network that can approximate any function with arbitrary accuracy. Thus, we assume that the neural network learned by minimizing $\mathcal{L}_{\texttt{Gen-CUDE}}$ results in an accurate enough approximation of the true 
probability vector $\mathbf{P}(Z_{0} | \mathbf{y}_{-0}^{(k)})$. 

Now, by letting 
\begin{align}
    \hat{\mathbf{P}}(X_0 | y_{-k}^{k}) \triangleq  \frac{p(\mathbf{y}_{-0}^{(k)}) }{p(y_{-k}^{k})} [\mathbf{p}^k(\mathbf{w}^*,\mathbf{y}_{-0}^{(k)}) \Pib^{-1}]\odot \mathbf{f}_{X_0}(y_0),\nonumber
\end{align}
we can then show from Assumption 2 that
\begin{equation}
    \mathbb{E} \|\mathbf{P}(X_{0}|Y_{-k}^{k})-\hat{\mathbf{P}}(X_{0}|Y_{-k}^{k})\|_{1}\leq \epsilon^{*},
    \label{ineq:prop_assumption}
\end{equation}
for $\epsilon^*=\epsilon'\sum_{a=0}^{M-1}\|\pi_a^{-1}\|_2$,
in which $\mathbb{E}(\cdot)$ is the expectation with respect to  $Y_{-k}^k$, and $\pi_a^{-1}$ stands for the $a$-th column of $\mathbf{\Pi}^{-1}$.
The proof of (\ref{ineq:prop_assumption}) is given Lemma 2 in the Supplementary Material, and it plays an important role in proving the main theorem.

Before stating the main theorem, we first introduce $\mathcal{R}_{\delta}$, which is a quantizer that rounds each component of a probability vector to the nearest integer multiple of $\delta>0$ in $[0,1]$. 
Then, consider a denoiser $\hat{X}^{n,\delta}_{NN}(Y^n)$ of which $i$-th component ($k\leq i\leq n-k$) is defined as 
$\hat{X}_{i,\text{NN}}^{\delta}(Y^n)=\mathcal{B}(\hat{\mathbf{P}}^{\delta}(X_i|Y_{i-k}^{i+k}))$,
where $\hat{\mathbf{P}}^{\delta}(X_i|Y_{i-k}^{i+k})=\mathcal{R}_{\delta}(\hat{\mathbf{P}}(X_i|Y_{i-k}^{i+k}))$. Note when $\delta$ is small enough, the performance of $\hat{X}^{n,\delta}_{\text{NN}}(Y^n)$ would be close to that of Gen-CUDE. Now, we have the following theorem.


\begin{theorem}
\label{thm_1}
Consider $\epsilon^*$  in (\ref{ineq:prop_assumption}). Then, 
for all $k,n\geq 1$, $\delta>0$, and $\epsilon > \Lambda_{\max}\cdot ( 3\epsilon^* + \frac{M \cdot \delta}{2})$, and for all $x^n$,
\begin{align*}
    \Pr\Big(&| L_{\hat{X}_{\text{NN}}^{n,\delta}}\left(x^{n}, Y^{n}\right)
- D_{x^n}^k |> \epsilon  \Big)\\
\leq& C_1(k,\delta,M)
\exp \Big(-\frac{2\left(n-2k\right)}{\left(2k+1\right)} C_2(\epsilon, \epsilon^*, \Lambda_{\max}, M, \delta) \Big),\nonumber
\end{align*}
in which $C_1(k,\delta,M)\triangleq2(2k+1)[\frac{1}{\delta}+1]^{M}$ and 
$C_2(\epsilon, \epsilon^*, \Lambda_{\max}, M, \delta)\triangleq ( \epsilon-\Lambda_{\max}\cdot ( 3\epsilon^* + \frac{M \cdot \delta}{2}) )^2 \cdot \frac{1}{\Lambda_{\max }^{2}}$.

\end{theorem}
\emph{Proof:} The full proof of the theorem as well as necessary lemmas are given in the Supplementary Material. 

Theorem \ref{thm_1} states that for any $x^n$, with high probability, Gen-CUDE can essentially achieve the performance of the best sliding-window denoiser with the same order $k$.
Note that our bound has the constant term $[\frac{1}{\delta}+1]^{M}$, whereas the paralleling result in \citep{dembo2005universal} has $[\frac{1}{\delta}+1]^{M^{2k+1}}$. Removing such doubly exponential dependency on $k$ in our result is mainly due to our directly modeling the marginal posterior distribution via neural network, as opposed to the modeling of the joint posterior of the $(2k+1)$-tuple in the previous work. 
This improvement carries over to the better empirical performance of the algorithm given in the next section.
\begin{figure*}[t]
\centering
\subfigure[][Denoising Performace]{
\includegraphics[width=2.0\columnwidth]{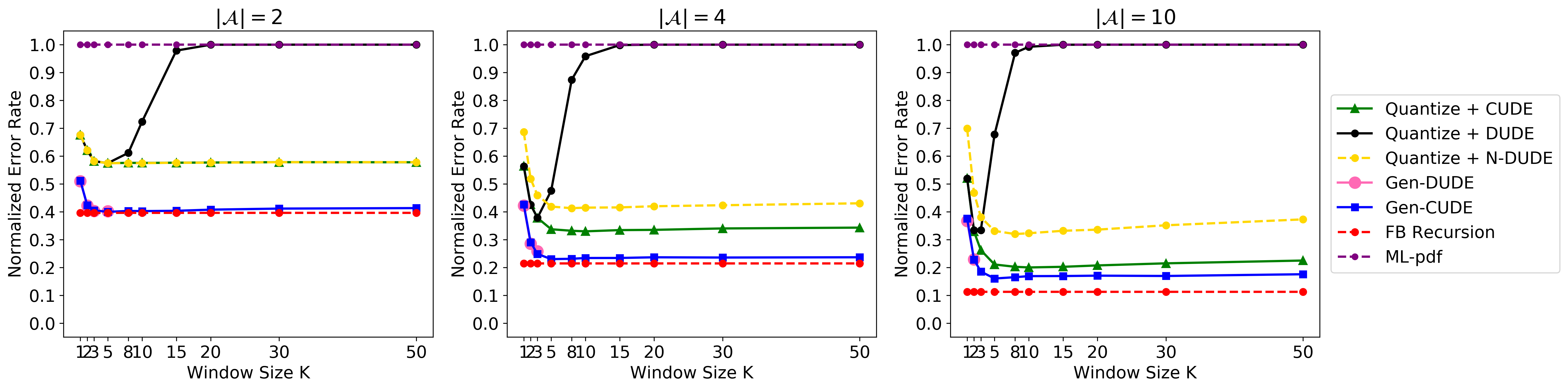}
\label{fig1a}
}
\subfigure[][Running Time]{
\includegraphics[width=2.0\columnwidth]{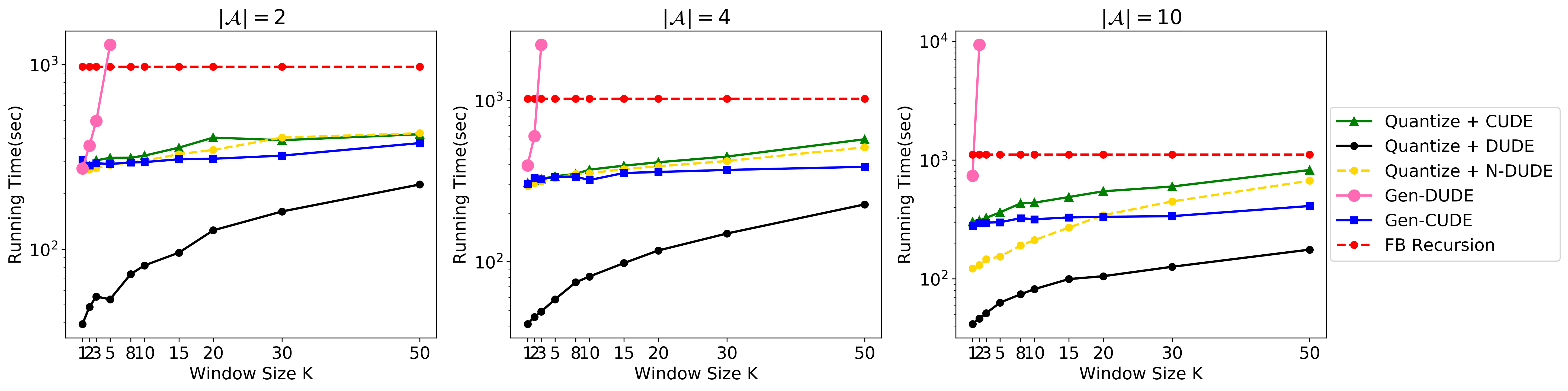}
\label{fig1b}
}
\caption{\label{fig1)}
Synthetic source data case. (a) Denoising results and (b) running time (including training time). The left, center, and right plots correspond to the case of $|\mcA|=2$, $4$, and $10$, respectively.
}\vspace{-.1in}
\end{figure*}

\section{Experimental Results}
\subsection{Setting and baselines}

We have experimented with both synthetic and real DNA source data and verified the effectiveness of our proposed Gen-CUDE algorithm. The noisy channel $\mathcal{C}=\{f_a\}_{a\in\mcA}$ was assumed to be known, and the noisy observation $Y^n$ was generated by corrupting the source sequence $x^n$. We used the Hamming loss as our $\Lb$ to measure the denoising performance.



We have compared the performance of \texttt{Gen-CUDE} with several baselines. The simplest baseline is \texttt{ML-pdf}, which carries out the symbol-by-symbol maximum likelihood estimate, i.e., $\hat{X}_i(Y^n)=\arg\max_a f_a(Y_i)$. 
The other baselines are schemes that apply discrete denoising algorithms on the quantized $Z^n$ using the induced DMC $\Pib$. That is, these schemes simply throw away the continuous-valued observation $Y^n$ and the density values. We denoted such schemes as \texttt{Quantized$+$DUDE}, \texttt{Quantized$+$N-DUDE}, and \texttt{Quantized$+$CUDE}. We also employed \texttt{Gen-DUDE} as a baseline for FIGO channel.
For neural network training, we used a fully-connected network with ReLU \citep{nair2010rectified} activations
. For more details on the implementation, the code is available online\footnote{\texttt{https://github.com/pte1236/Gen-CUDE}}.

\subsection{Synthetic source with Gaussian noise}


For the synthetic source data case, we generated the clean sequence $x^n$ from a symmetric Markov chain. We varied the alphabet size $|\mcA|=2,4,10$, and the source symbol was encoded to have odd integer values $\mathcal{O}=\{\pm (2\ell-1):1\leq \ell \leq |\mcA|/2\}$. The transition probability of the Markov source was set to $0.9$ for staying on the same state and $0.1/|\mcA|$ for transitioning to the other state. The sequence length was $n=3\times 10^6$, and the noisy channel was set to be the standard additive white Gaussian, $\mathcal{N}(0,1)$. The neural network had 6 fully-connected layers and 200 nodes in each layer.
For the quantizer $Q(\cdot)$ in all of our experiments, 
we simply rounded to the nearest integer among $\mathcal{O}$. Note that $Q(\cdot)$ can be freely selected for \texttt{Gen-CUDE} as long as the induced DMC, $\mathbf{\Pi}$, is invertible, and we show the little effect of the choice of $Q(\cdot)$ on the denoising performance 
in the Supplementary Material. 

The denoising performance as well as the running time of each scheme is given in Figure \ref{fig1)}, and the performance in Figure \ref{fig1a} was normalized with the performance of the simple quantizer, $\hat{X}_i=Z_i=Q(Y_i)$. Note for the symmetric Gaussian noise, \texttt{ML-pdf} becomes equivalent to applying $Q(\cdot)$, but they can become different for general noise densities. Moreover, we compared the performance with \texttt{FB-Recursion} \citep[Section V.]{EphMer02}, which is the optimal
scheme for the given setting since the noisy sequence becomes a hidden Markov process (HMP). 

\begin{figure*}[h]
\centering\includegraphics[width=2.0\columnwidth]{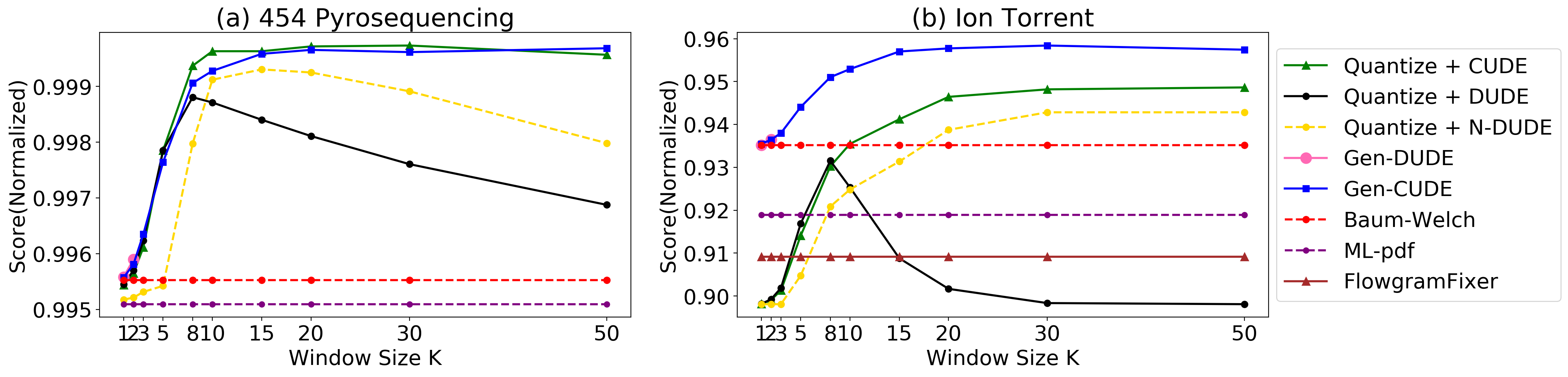}
\caption{\label{fig2}
Normalized similarity scores against the clean reference DNA sequence for both sequencing platforms. 
}
\vspace{-.2in}
\end{figure*}

From the figures, we can make several observations. Firstly, we note that the neural network-based schemes, i.e., \texttt{Quantize$+$N-DUDE}, \texttt{Quantize$+$CUDE}, and our \texttt{Gen-CUDE},  are very robust with respect to the window size $k$. The effect of the window size $k$ for \texttt{Gen-CUDE} not being huge compared to \texttt{Gen-DUDE} can be predicted from the bound in Theorem 1. In contrast, \texttt{Quantize$+$DUDE} becomes quite sensitive to $k$ as has been identified in \citep{moon2016neural}. Secondly, we observe our \texttt{Gen-CUDE} always achieves the best denoising performance among the baselines and gets close to the optimal \texttt{FB-Recursion}. Note \texttt{Gen-CUDE} knows nothing about the source sequence $x^n$, whereas \texttt{FB-Recursion} exactly knows the source Markov model. Moreover, while \texttt{Gen-DUDE} performs almost as well as \texttt{Gen-CUDE} for $|\mcA|=2$ with appropriate $k$, its performance significantly deteriorates when the alphabet size grows. We see that the gap between \texttt{Gen-CUDE} and  \texttt{FB recursion} widens (although not much) as the alphabet size $M$ increases, which can also be predicted from the bound in Theorem 1. Thirdly, \texttt{Gen-DUDE} suffers from the prohibitive computational complexity as $k$ grows, as shown in Figure \ref{fig1b}, while the running time of our \texttt{Gen-CUDE} is orders of magnitude faster than that of \texttt{Gen-DUDE} and more or less constant with respect to $k$. From this reason, \texttt{Gen-DUDE} can be run only for small $k$ values. Fourthly, we note \texttt{Quantize$+$CUDE} also performs reasonably well, and it outperforms all discrete denoising baselines as also shown in \citep{ryu2018conditional}. However, since it discards the additional soft information in the continuous-valued observation and density values, \texttt{Gen-CUDE}, which is tailored for the FIGO channel, outperforms \texttt{Quantize$+$CUDE} with a significant gap.

\vspace{-.1in}

\subsection{DNA source with homopolymer errors}

Now, we verify the performance of \texttt{Gen-CUDE} on real DNA sequencing data. 
We focus on the homopolymer errors which is the dominant
error type in “sequencing by synthesis” methods, e.g., Roche 454 pyrosequencing \citep{quince2011removing} or Ion Torrent
Personal Genome Machine (PGM) \citep{bragg2013shining}.
In those methods, each nucleotide in turn is iteratively washed over with a pre-determined short sequence of bases known as the “wash cycle”, and the continuous-valued flowgrams are observed. 
Recently, \citep{LeeMooYooWei16} describes how we can interpret the base-calling procedure of such sequencers exactly as our FIGO channel setting by mapping the DNA sequence into sequence of integers (of homopolymer length), which becomes the input to the noisy channel $\{f_a\}_{a\in\mcA}$, the flowgram densities for each homopolymer length. This interpretation is possible since the order of the nucleotides in the wash cycle is fixed. Denoising in such setting can correct insertion and deletion errors, the dominant and notoriously hard types of errors in such sequencers.

We used \texttt{Artificial.dat}, a public dataset used in \citep{quince2011removing} for 454 pyrosequencing, and \texttt{IonCode0202.CE2R.raw.dat}, a data obtained from an internal source, for Ion Torrent after preprocessing both datasets. We \emph{simulated} the channel of each platform to obtain the noisy sequence that is corrupted with the homopolymer errors. The used noisy channel density was provided for 454, but not for Ion Torrent, hence, we estimated the density for Ion Torrent with a small holdout set with clean source using Gaussian kernel density estimation with bandwidth 0.6. The used densities for 454 and Ion Torrent are shown in the Appendix D of the Supplementary Material. 
The total sequence length $n$ for 454 and Ion Torrent data was 6,845,624 and 4,101,244, respectively.
Moreover, the wash cycle of 454 and Ion Torrent was \texttt{TACG} and \texttt{TACGTACGTCTGAGCATCGATCGATGTACAGC}, respectively.
We set the maximum homopolymer length to 9, hence, the source symbol can take values among $\{0,\ldots,9\}$. 
The neural network for \texttt{Gen-CUDE} had 7 layers with 500 nodes in each layer. After denoising, the error correction performance was compared with the similarity score between the clean reference sequence and the denoised sequence (after converting back from the sequence of homopolymer lengths to the DNA sequence). 
The score was computed with the Pairwise2 module of Biopython, a common alignment tool to compute the similarity between DNA sequences \citep{chang2010biopython}.

Figure \ref{fig2} shows the error correction performance for both 454 and Ion Torrent platform data. The score is normalized so that 1 corresponds to the perfect recovery. We also included \texttt{Baum-Welch} \citep{BaumPetrie70} that treats the source as a Markov source and estimates the transition probability before applying the \texttt{FB-recursion}. We can make the following observations. Firstly, we note that \texttt{Baum-Welch} is no longer optimum since the source DNA sequence is far from being a Markov. Secondly, \texttt{Quantize$+$DUDE}, which was used to correct the homopolymer errors in \citep{LeeMooYooWei16}, turns out to be suboptimal, as also observed in Figure \ref{fig1)}. Thirdly,  \texttt{Gen-CUDE} again achieves the best error correction performance for both 454 and Ion Torrent data. Note for 454, in which the original error rate is quite small, the performance of \texttt{Quantize$+$CUDE} and \texttt{Gen-CUDE} becomes almost indistinguishable, but in Ion Torrent, of which noise density has a higher variance and non-zero means, the performance gap between the two widens. Fourthly, in line with Figure \ref{fig1)}, we were not able to run \texttt{Gen-DUDE} for more than $k=2$.
Finally, for Ion Torrent, we also run \texttt{FlowgramFixer}~\citep{golan2013using}, the state-of-the-art homopolymer error correction tool for Ion Torrent, but it showed the worst performance.

\vspace{-.1in}

\section{Conclusion}

We devised a novel unsupervised neural network-based \texttt{Gen-CUDE} algorithm, which carries out universal denoising for FIGO channel. Our algorithm was shown to significantly outperform previously developed algorithm for the same setting, \texttt{Gen-DUDE}, both in denoising performance and computation complexity. We also give a rigorous theoretical analyses on the scheme and obtain a tighter upper bound on the average error compared to \texttt{Gen-DUDE}. Our experimental results show promising results, and as a future work, we plan to apply our method to \emph{real} noisy data denoising and make more algorithmic improvements, e.g., using adaptive quantizers instead of simple rounding. 

\section*{Acknowledgement}
This work is supported in part by Institute of Information \& communications Technology Planning  Evaluation (IITP) grant funded by the Korea government (MSIT) [No.2016-0-00563, Research on adaptive machine learning technology development for intelligent autonomous digital companion], [No.2019-0-00421, AI Graduate School Support Program (Sungkyunkwan University)], [No.2019-0-01396, Development of framework for analyzing, detecting, mitigating of bias in AI model and training data], and [IITP-2019-2018-0-01798, ITRC Support Program]. The authors also thank Seonwoo Min, Byunghan Lee and Sungroh Yoon for their helpful discussions on DNA sequence denoising, and thank Jae-Ho Shin for providing the raw Ion Torrent dataset.

\bibliography{bibfile}

\begin{thebibliography}{}

\bibitem[Baum et~al., 1970]{BaumPetrie70}
Baum, L., Petrie, T., Soules, G., and Weiss, N. (1970).
\newblock A maximization technique occuring in the statistical analysis of
  probabilistic functions of {M}arkov chains.
\newblock {\em Annals of Mathematical Statistics}, 41(164-171).

\bibitem[Bragg et~al., 2013]{bragg2013shining}
Bragg, L.~M., Stone, G., Butler, M.~K., Hugenholtz, P., and Tyson, G.~W.
  (2013).
\newblock Shining a light on dark sequencing: characterising errors in ion
  torrent pgm data.
\newblock {\em PLoS computational biology}, 9(4):e1003031.

\bibitem[Chang et~al., 2010]{chang2010biopython}
Chang, J., Chapman, B., Friedberg, I., Hamelryck, T., De~Hoon, M., Cock, P.,
  Antao, T., and Talevich, E. (2010).
\newblock Biopython tutorial and cookbook.
\newblock {\em Update}, pages 15--19.

\bibitem[Cybenko, 1989]{Cybenko89}
Cybenko, G. (1989).
\newblock Approximation by superposition of a sigmoidal function.
\newblock {\em Math.\ Control Systems Signals}, 2(4):303--314.

\bibitem[Dabov et~al., 2007]{bm3d}
Dabov, K., Foi, A., Katkovnik, V., and Egiazarian, K. (2007).
\newblock Image denoising by sparse 3-d transform-domain collaborative
  filtering.
\newblock {\em IEEE Trans. Image Processing}, 16(8):2080--2095.

\bibitem[Dembo and Weissman, 2005]{dembo2005universal}
Dembo, A. and Weissman, T. (2005).
\newblock Universal denoising for the finite-input general-output channel.
\newblock {\em IEEE transactions on information theory}, 51(4):1507--1517.

\bibitem[Donoho and Johnstone, 1995]{DonJoh95}
Donoho, D. and Johnstone, I. (1995).
\newblock Adapting to unknown smoothness via wavelet shrinkage.
\newblock {\em Journal of American Statistical Association},
  90(432):1200--1224.

\bibitem[Elad and Aharon, 2006]{ElaAha06}
Elad, M. and Aharon, M. (2006).
\newblock Image denoising via sparse and redundant representations over learned
  dictionaries.
\newblock {\em IEEE Trans. Image Processing}, 54(12):3736--3745.

\bibitem[Ephraim and Merhav, 2002]{EphMer02}
Ephraim, Y. and Merhav, N. (2002).
\newblock Hidden markov processes.
\newblock {\em {IEEE} Trans. Inform. Theory}, 48(6):1518--1569.

\bibitem[Golan and Medvedev, 2013]{golan2013using}
Golan, D. and Medvedev, P. (2013).
\newblock Using state machines to model the ion torrent sequencing process and
  to improve read error rates.
\newblock {\em Bioinformatics}, 29(13):i344--i351.

\bibitem[Hornik et~al., 1989]{HorStiWhi89}
Hornik, K., Stinchcombe, M., and White, H. (1989).
\newblock Multilayer feedforward networks are universal approximators.
\newblock {\em Neural Networks}, 2:359--366.

\bibitem[Kingma and Ba, 2014]{kingma2014adam}
Kingma, D.~P. and Ba, J. (2014).
\newblock Adam: A method for stochastic optimization.
\newblock {\em arXiv preprint arXiv:1412.6980}.

\bibitem[Lee et~al., 2017]{LeeMooYooWei16}
Lee, B., Moon, T., Yoon, S., and Weissman, T. (2017).
\newblock {DUDE-S}eq: {F}ast, flexible, and robust denoising for targeted
  amplicon sequencing.
\newblock {\em {PLoS ONE}, 12(7):e0181463}.

\bibitem[Moon et~al., 2016]{moon2016neural}
Moon, T., Min, S., Lee, B., and Yoon, S. (2016).
\newblock Neural universal discrete denoiser.
\newblock In {\em Advances in Neural Information Processing Systems}, pages
  4772--4780.

\bibitem[Moon and Weissman, 2009]{sdude}
Moon, T. and Weissman, T. (2009).
\newblock Discrete denoising with shifts.
\newblock {\em {IEEE} Trans. Inform. Theory}, 55(11):5284--5301.

\bibitem[Nair and Hinton, 2010]{nair2010rectified}
Nair, V. and Hinton, G.~E. (2010).
\newblock Rectified linear units improve restricted boltzmann machines.
\newblock In {\em Proceedings of the 27th international conference on machine
  learning (ICML-10)}, pages 807--814.

\bibitem[Quince et~al., 2011]{quince2011removing}
Quince, C., Lanzen, A., Davenport, R.~J., and Turnbaugh, P.~J. (2011).
\newblock Removing noise from pyrosequenced amplicons.
\newblock {\em BMC bioinformatics}, 12(1):38.

\bibitem[Ryu and Kim, 2018]{ryu2018conditional}
Ryu, J. and Kim, Y.-H. (2018).
\newblock Conditional distribution learning with neural networks and its
  application to universal image denoising.
\newblock In {\em 2018 25th IEEE International Conference on Image Processing
  (ICIP)}, pages 3214--3218. IEEE.

\bibitem[Weissman et~al., 2005]{Dude}
Weissman, T., Ordentlich, E., Seroussi, G., Verdu, S., and Weinberger, M.
  (2005).
\newblock Universal discrete denoising: {K}nown channel.
\newblock {\em {IEEE} Trans. Inform. Theory}, 51(1):5--28.

\end{thebibliography}

\clearpage

\setcounter{lemma}{1}
\setcounter{figure}{2}
\setcounter{section}{0}

\newcounter{counting_number}
\addtocounter{counting_number}{15}

\onecolumn
\aistatstitle{Supplementary Material for \\Unsupervised Neural Universal Denoiser for Finite-Input General-Output Noisy Channel}
\aistatsauthor{ Tae-Eon Park and Taesup Moon}
\aistatsaddress{Department of Electrical and Computer Engineering\\ Sungkyunkwan University (SKKU), Suwon, Korea 16419 \\ \texttt{\{pte1236, tsmoon\}@skku.edu}
 }








\gdef\thesection{Appendix \Alph{section}}
\section{Proof of Theorem 1}\label{appendix:proof}
The following lemma formalizes to prove Eq. (16) in the paper. First, let  
$    \epsilon^*=\epsilon'\sum_{a=0}^{M-1}\|\pi_a^{-1}\|_2
$
as shown in the paper and define
\begin{align}
    \hat{\mathbf{P}}(X_0 | y_{-k}^{k}) \triangleq  \frac{p(\mathbf{y}_{-0}^{(k)}) }{p(y_{-k}^{k})} \cdot [\mathbf{p}^k(\mathbf{w}^*,\mathbf{y}_{-0}^{(k)}) \cdot \Pib^{-1}]\odot \mathbf{f}_{X_0}(y_0).\label{eq:p_hat}
\end{align}

\begin{lemma}
Suppose network parameter $\mathbf{w}^*$ learned by minimizing $\mathcal{L}_{\texttt{Gen-CUDE}}$ satisfies Assumption 2. Then,
\begin{equation*}
     \mathbb{E} \|\mathbf{P}(X_{0}|Y_{-k}^{k})-\hat{\mathbf{P}}(X_{0}|Y_{-k}^{k})\|_{1}\leq \epsilon^{*},
\end{equation*}
in which the expectation is with respect to $Y_{-k}^k$.
\label{lemma_assumption2_to_15}
\end{lemma}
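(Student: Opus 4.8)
The plan is to derive, in parallel to \eqref{eq:p_hat}, an \emph{exact} Bayes-rule representation of the true posterior $\mathbf{P}(X_{0}\mid y_{-k}^{k})$ of the same shape, so that the entire discrepancy is funnelled through the network error. Writing $\mathbf{p}^{k}(\mathbf{y}_{-0}^{(k)})$ for the true conditional pmf (of the discretized context symbol at $0$, given the context) that $\mathbf{p}^{k}(\mathbf{w}^{*},\cdot)$ is trained to match, channel inversion is exact under the model, i.e.\ $\mathbf{p}^{k}(\mathbf{y}_{-0}^{(k)})\cdot\Pib^{-1}=\mathbf{P}(X_{0}\mid\mathbf{y}_{-0}^{(k)})$; combining this with Bayes' rule for the extra raw observation $y_{0}$ and the memorylessness of the channel (so that $y_{0}$ is conditionally independent of the context given $X_{0}$), one obtains
\[
  \mathbf{P}(X_{0}\mid y_{-k}^{k})=\frac{p(\mathbf{y}_{-0}^{(k)})}{p(y_{-k}^{k})}\,[\mathbf{p}^{k}(\mathbf{y}_{-0}^{(k)})\cdot\Pib^{-1}]\odot\mathbf{f}_{X_{0}}(y_{0}),
\]
which is the identity behind Eq. (16). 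Since the scalar prefactor and the likelihood factor $\mathbf{f}_{X_{0}}(y_{0})$ are identical to those in \eqref{eq:p_hat}, the two representations differ only through $\mathbf{p}^{k}(\mathbf{w}^{*},\cdot)$ versus $\mathbf{p}^{k}(\cdot)$.

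Subtracting, and setting $\mathbf{e}\triangleq\mathbf{p}^{k}(\mathbf{y}_{-0}^{(k)})-\mathbf{p}^{k}(\mathbf{w}^{*},\mathbf{y}_{-0}^{(k)})$, the common factors pull out (the prefactor is a nonnegative scalar and $\mathbf{f}_{X_{0}}(y_{0})$ is nonnegative):
\[
  \big\|\mathbf{P}(X_{0}\mid y_{-k}^{k})-\hat{\mathbf{P}}(X_{0}\mid y_{-k}^{k})\big\|_{1}=\frac{p(\mathbf{y}_{-0}^{(k)})}{p(y_{-k}^{k})}\sum_{a=0}^{M-1}\big|[\mathbf{e}\,\Pib^{-1}]_{a}\big|\,[\mathbf{f}_{X_{0}}(y_{0})]_{a}.
\]
Now take $\mathbb{E}$ over $Y_{-k}^{k}$, writing it as an integral against $p(y_{-k}^{k})$: the $1/p(y_{-k}^{k})$ cancels the density. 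Because $[\mathbf{e}\,\Pib^{-1}]_{a}$ depends only on the context $\mathbf{y}_{-0}^{(k)}$ and not on $y_{0}$, and $\int[\mathbf{f}_{X_{0}}(y_{0})]_{a}\,dy_{0}=1$ for each $a$ (a proper channel density), marginalizing $y_{0}$ removes the likelihood factor, leaving $\mathbb{E}\|\mathbf{P}-\hat{\mathbf{P}}\|_{1}=\mathbb{E}_{\mathbf{Y}_{-0}^{(k)}}\|\mathbf{e}(\mathbf{Y}_{-0}^{(k)})\cdot\Pib^{-1}\|_{1}$.

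It remains to bound $\|\mathbf{e}\,\Pib^{-1}\|_{1}$ by $\|\mathbf{e}\|_{1}$ and the per-symbol inverse-norms $\|\pi_{a}^{-1}\|_{2}$. The key inequality is
\[
  \|\mathbf{e}\,\Pib^{-1}\|_{1}\le\Big(\sum_{a=0}^{M-1}\|\pi_{a}^{-1}\|_{2}\Big)\|\mathbf{e}\|_{1},
\]
obtained by a Cauchy--Schwarz/Hölder bound applied coordinate- (or block-)wise: each coordinate (block) of $\mathbf{e}\,\Pib^{-1}$ pairs $\mathbf{e}$ with the corresponding slice of $\Pib^{-1}$ and so contributes at most $\|\mathbf{e}\|_{2}\,\|\pi_{a}^{-1}\|_{2}\le\|\mathbf{e}\|_{1}\,\|\pi_{a}^{-1}\|_{2}$, and summing over $a$ accumulates the $\|\pi_{a}^{-1}\|_{2}$. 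Taking expectations and invoking Assumption~2 — which asserts that the $\mathbf{w}^{*}$ produced by minimizing $\mathcal{L}_{\texttt{Gen-CUDE}}$ satisfies $\mathbb{E}_{\mathbf{Y}_{-0}^{(k)}}\|\mathbf{p}^{k}(\mathbf{w}^{*},\mathbf{Y}_{-0}^{(k)})-\mathbf{p}^{k}(\mathbf{Y}_{-0}^{(k)})\|_{1}\le\epsilon'$ — gives $\mathbb{E}\|\mathbf{P}-\hat{\mathbf{P}}\|_{1}\le\epsilon'\sum_{a=0}^{M-1}\|\pi_{a}^{-1}\|_{2}=\epsilon^{*}$, as claimed.

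I expect the main obstacle to be the first step: pinning down the exact Bayes representation in the form \eqref{eq:p_hat}, which needs care about exactly what is conditioned on (the discretized context $\mathbf{y}_{-0}^{(k)}$ versus the raw real-valued $y_{0}$), why the scalar prefactor is \emph{precisely} $p(\mathbf{y}_{-0}^{(k)})/p(y_{-k}^{k})$, and why channel inversion by $\Pib^{-1}$ is exact rather than merely approximate under the model. A secondary point needing care is the matrix inequality above: making the per-symbol norms $\|\pi_{a}^{-1}\|_{2}$ accumulate into a \emph{sum} (rather than a single operator norm or a maximum) depends on the block/column structure of $\Pib$ and on performing the Hölder step at the right granularity; once that structure is fixed, this step and the $y_{0}$-marginalization are routine.
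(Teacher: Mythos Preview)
Your proposal is correct and follows essentially the same route as the paper: both use the Bayes-rule identity (the paper's Lemma~1) to write $\mathbf{P}(X_0\mid y_{-k}^k)$ in the same form as \eqref{eq:p_hat}, reduce the difference to $\mathbf{e}\,\Pib^{-1}$ weighted by $f_a(y_0)\,p(\mathbf{y}_{-0}^{(k)})$, apply Cauchy--Schwarz columnwise followed by $\|\cdot\|_2\le\|\cdot\|_1$, integrate $f_a(y_0)$ to $1$, and invoke Assumption~2. The only cosmetic difference is ordering: you marginalize $y_0$ first to obtain the clean identity $\mathbb{E}\|\mathbf{P}-\hat{\mathbf{P}}\|_1=\mathbb{E}_{\mathbf{Y}_{-0}^{(k)}}\|\mathbf{e}\,\Pib^{-1}\|_1$ and then bound, whereas the paper applies Cauchy--Schwarz inside the full $(2k{+}1)$-fold integral before integrating out $y_0$.
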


\begin{proof}
We have the following chain of equations.
    \begin{align}
    &  \mathbb{E} \|\mathbf{P}(X_{0}|Y_{-k}^{k})-\hat{\mathbf{P}}(X_{0}|Y_{-k}^{k})\|_{1} = \int_{\mathbb{R}^{2k+1}} p(y_{-k}^{k})\cdot \| \mathbf{P}(X_{0}|y_{-k}^{k})-\hat{\mathbf{P}}(X_{0}|y_{-k}^{k}) \|_{1} \, d y_{-k}^{k} \nonumber\\
     = \, & \int_{\mathbb{R}^{2k+1}}  p(y_{-k}^{k})\cdot  \Big\| \frac{p(\mathbf{y}_{-0}^{(k)}) }{p(y_{-k}^{k})} \cdot [ \Big( \mathbf{P}(Z_{0}|\mathbf{y}_{-0}^{(k)})-\mathbf{p}^k(\mathbf{w}^*,\mathbf{y}_{-0}^{(k)}) \Big) \cdot \Pib^{-1} ] \odot \mathbf{f}_{X_0}(y_0) \Big\|_{1} \, d y_{-k}^{k} \label{eq:lem1_1} \\
     = \, & \int_{\mathbb{R}^{2k+1}}  p(y_{-k}^{k})\cdot \Big[ \sum_{a=0}^{M-1} \Big|  \Big( \mathbf{P}(Z_{0}|\mathbf{y}_{-0}^{(k)})-\mathbf{p}^k(\mathbf{w}^*,\mathbf{y}_{-0}^{(k)}) \Big) \cdot \pi_{a}^{-1} \cdot {f}_{a}(y_0) \Big| \Big] \cdot \frac{p(\mathbf{y}_{-0}^{(k)}) }{p(y_{-k}^{k})} \, d y_{-k}^{k}  \nonumber\\
     = \, & \sum_{a=0}^{M-1} \int_{\mathbb{R}^{2k+1}}  \Big|  \Big( \mathbf{P}(Z_{0}|\mathbf{y}_{-0}^{(k)})-\mathbf{p}^k(\mathbf{w}^*,\mathbf{y}_{-0}^{(k)}) \Big) \cdot \pi_{a}^{-1} \Big| \cdot {f}_{a}(y_0) \cdot  p(\mathbf{y}_{-0}^{(k)})  \, d y_{-k}^{k}  \nonumber\\
     \leq \, & \sum_{a=0}^{M-1} \int_{\mathbb{R}^{2k+1}}  \Big\|   \mathbf{P }(Z_{0}|\mathbf{y}_{-0}^{(k)})-\mathbf{p}^k(\mathbf{w}^*,\mathbf{y}_{-0}^{(k)}) \Big\|_{2} \cdot \| \pi_{a}^{-1} \|_{2}  \cdot {f}_{a}(y_0) \cdot  p(\mathbf{y}_{-0}^{(k)})  \, d y_{-k}^{k}  \label{eq:lem1_2}\\
     \leq \, &  \sum_{a=0}^{M-1} \int_{\mathbb{R}^{2k+1}}  \Big\|   \mathbf{P }(Z_{0}|\mathbf{y}_{-0}^{(k)})-\mathbf{p}^k(\mathbf{w}^*,\mathbf{y}_{-0}^{(k)}) \Big\|_{1} \cdot \| \pi_{a}^{-1} \|_{2}  \cdot {f}_{a}(y_0) \cdot  p(\mathbf{y}_{-0}^{(k)})  \, d y_{-k}^{k}  \label{eq:lem1_3}\\
     \leq \, &  \epsilon' \cdot  \sum_{a=0}^{M-1} \Big[ \| \pi_{a}^{-1} \|_{2} \cdot \int_{\mathbb{R}^{2k+1}} {f}_{a}(y_0) \cdot  p(\mathbf{y}_{-0}^{(k)})   \, d y_{-k}^{k} \Big]  \label{eq:lem1_4}\\
     = \, & \epsilon' \cdot  \sum_{a=0}^{M-1} \Big[ \| \pi_{a}^{-1} \|_{2} \cdot \int_{\mathbb{R}} {f}_{a}(y_0) \int_{\mathbb{R}^{2k}}  p(\mathbf{y}_{-0}^{(k)}) \, d \mathbf{y}_{-0}^{(k)} \, d y_{0} \Big]   =  \epsilon' \cdot \sum_{a=0}^{M-1} \| \pi_{a}^{-1} \|_{2}   = \epsilon^{*},
     \end{align}
     in which (\ref{eq:lem1_1}) follows from Lemma 1 and (\ref{eq:p_hat}), (\ref{eq:lem1_2}) follows from the Cauchy-Schwarz inequality, and (\ref{eq:lem1_3}) follows from the fact that $L_2$-norm is smaller than the $L_1$-norm, and (\ref{eq:lem1_4}) follows from Assumption 2.\ \ \qed
\end{proof}

\clearpage

\begin{lemma}
    Let $\mathcal{R}_{\delta}(\cdot)$ denote the quantizer that rounds each component of the argument probability vector to the nearest integer multiple of $\delta$ in $(0,1]$. 
    For $M>0$ and $\delta>0$, denote $\hat{\mathbf{P}}^{\delta}$ = $\mathcal{R}_{\delta}(\hat{\mathbf{P}})$. Then, 
    \begin{equation*}
    \left\|\hat{\mathbf{P}}^{\delta}\left(X_{0} | y_{-k}^{k}\right)-\hat{\mathbf{P}}\left(X_{0} | y_{-k}^{k}\right)\right\|_{1} 
    \leq \frac{M \cdot \delta}{2}.
    \end{equation*}
    \label{lemma_quan}
\end{lemma}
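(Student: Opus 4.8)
The plan is to exploit that the quantizer $\mathcal{R}_{\delta}$ acts coordinatewise, so that the $L_{1}$ deviation between $\hat{\mathbf{P}}^{\delta}$ and $\hat{\mathbf{P}}$ decomposes into a sum of $M$ one-dimensional rounding errors, each controlled by the step size; the whole argument is then just the standard scalar quantization estimate applied $M$ times.

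Concretely, I would first fix an arbitrary noisy window $y_{-k}^{k}$ and view $\hat{\mathbf{P}}(X_{0}\mid y_{-k}^{k})$ from~\eqref{eq:p_hat} as a vector in $\mathbb{R}^{M}$ with coordinates indexed by the input symbols $a\in\{0,\dots,M-1\}$; write $\hat{P}_{a}$ and $\hat{P}_{a}^{\delta}$ for the $a$-th coordinates of $\hat{\mathbf{P}}$ and of $\hat{\mathbf{P}}^{\delta}=\mathcal{R}_{\delta}(\hat{\mathbf{P}})$, respectively. The key (elementary) fact is that the distance from any real number to the nearest integer multiple of $\delta$ is at most $\delta/2$, so $|\hat{P}_{a}^{\delta}-\hat{P}_{a}|\le\delta/2$ for each $a$. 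Summing over coordinates gives
\begin{equation*}
\bigl\|\hat{\mathbf{P}}^{\delta}(X_{0}\mid y_{-k}^{k})-\hat{\mathbf{P}}(X_{0}\mid y_{-k}^{k})\bigr\|_{1}
=\sum_{a=0}^{M-1}\bigl|\hat{P}_{a}^{\delta}-\hat{P}_{a}\bigr|
\le\sum_{a=0}^{M-1}\frac{\delta}{2}
=\frac{M\cdot\delta}{2},
\end{equation*}
and since $y_{-k}^{k}$ was arbitrary the bound holds pointwise (hence also in expectation).

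The only step that is not purely mechanical is justifying the per-coordinate bound $\delta/2$ in the presence of the constraint that the quantization grid lies in $(0,1]$: a coordinate $\hat{P}_{a}$ falling in $[0,\delta/2)$ (or slightly exceeding $1$) has no admissible grid point within $\delta/2$, so the naive rounding estimate needs an argument there. I would dispose of this by recalling that in the Gen-CUDE construction the entries of $\hat{\mathbf{P}}$ are built from the strictly positive network output $\mathbf{p}^{k}(\mathbf{w}^{*},\cdot)$, the likelihood values $f_{a}(y_{0})$, and $\Pib^{-1}$, so the relevant coordinates can be taken to lie in the interior of $[0,1]$ and above $\delta/2$ (or, equivalently, by adopting the convention that the quantizer leaves such negligible coordinates untouched); with this in hand the coordinatewise estimate is restored and the displayed summation closes the proof. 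I expect this boundary bookkeeping to be the main — and essentially the sole — obstacle; everything else is immediate.
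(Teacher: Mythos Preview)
Your proposal is correct and follows exactly the paper's argument: bound each coordinate's rounding error by $\delta/2$ (equivalently, $\|\hat{\mathbf{P}}^{\delta}-\hat{\mathbf{P}}\|_{\infty}\le\delta/2$) and sum over the $M$ coordinates. The paper's proof does not address the boundary issue you raise about coordinates near $0$ or $1$; it simply asserts the $\delta/2$ bound as ``clear,'' so your extra paragraph is more careful than the original but otherwise the two proofs are identical.
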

\begin{proof}
    By the definition of $\hat{\mathbf{P}}^{\delta}$, it is clear that
    \begin{align*}
    & \left\|\hat{\mathbf{P}}^{\delta}\left(X_{0} | y_{-k}^{k}\right)-\hat{\textbf{P}}\left(X_{0} | y_{-k}^{k}\right)\right\|_{\infty} 
    \leq \frac{\delta}{2}. \\
    \text{Therefore, } & \left\|\hat{\mathbf{P}}^{\delta}\left(X_{0} | y_{-k}^{k}\right)-\hat{\textbf{P}}\left(X_{0} | y_{-k}^{k}\right)\right\|_{1} 
    = \sum_{a=0}^{M-1}|\hat{p}^{\delta}\left(a | y_{-k}^{k}\right)-\hat{p}(a | y_{-k}^{k})|
    \leq \sum_{a=0}^{M-1} \frac{\delta}{2} \leq \frac{M\delta}{2}. \ \ \qed
    \end{align*}
\end{proof}

From Lemma \ref{lemma_quan}, we can expect that for the sufficiently small $\delta$, performance of the denoisers using $\hat{\mathbf{P}}^{\delta}$ and $\hat{\mathbf{P}}$ respectively, for computing the Bayes response will be close to each other.




\begin{lemma}
Consider $\hat{\mathbf{P}}(X_{0} | Y_{-k}^{k})$ and $\epsilon^*$ defined in Lemma 2 and the performance target $D_{x^n}^k$ defined in Eq. (15) in the paper. Then, we have
$$
\Big|D_{x^n}^k - \mathbb{E}_{P_{x^n}^k \otimes \mathcal{C}} \Big[U\Big(\hat{\mathbf{P}}({X_{0} | Y_{-k}^{k})}\Big)\Big]\Big|\leq \Lambda_{\max}\cdot \epsilon^*,
$$
\label{lemma_ex_est}
in which $P_{x^n}^k \otimes \mathcal{C}$ stands for the joint distribution on $(X_0,Y_{-k}^k)$ defined by the empirical distribution $P_{x^n}^k(u_{-k}^k) = \frac{1}{n-2k}\mathbf{r}[x^n,u_{-k}^k]$ with
$\mathbf{r}[x^{n}, u_{-k}^{k}]=|\{k+1 \leq i \leq n-k: x_{i-k}^{i+k}=u_{-k}^{k}\}|$  and the channel density $\mathcal{C}$.
\end{lemma}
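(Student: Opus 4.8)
The plan is to reduce the statement to the $L_1$ bound of Lemma~\ref{lemma_assumption2_to_15} via Lipschitz continuity of the Bayes-risk functional $U$. Recall that Eq.~(15) expresses the performance target as the expected loss of the optimal $k$-th order sliding-window rule under the law $P_{x^n}^k\otimes\mathcal{C}$, i.e.
\begin{equation*}
D_{x^n}^k=\mathbb{E}_{P_{x^n}^k\otimes\mathcal{C}}\big[U\big(\mathbf{P}(X_0\mid Y_{-k}^k)\big)\big],
\end{equation*}
the optimal such rule being the one that applies the Bayes response to the true posterior $\mathbf{P}(X_0\mid Y_{-k}^k)$. Since the arguments of $U$ in both terms depend on the sample only through $Y_{-k}^k$, linearity of expectation together with $|\mathbb{E}[g]|\le\mathbb{E}|g|$ gives
\begin{equation*}
\Big|D_{x^n}^k-\mathbb{E}_{P_{x^n}^k\otimes\mathcal{C}}\big[U(\hat{\mathbf{P}}(X_0\mid Y_{-k}^k))\big]\Big|\le\mathbb{E}_{P_{x^n}^k\otimes\mathcal{C}}\Big[\big|U(\mathbf{P}(X_0\mid Y_{-k}^k))-U(\hat{\mathbf{P}}(X_0\mid Y_{-k}^k))\big|\Big].
\end{equation*}

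Next I would prove the key estimate that $U$ is $\Lambda_{\max}$-Lipschitz in the $L_1$-norm. Writing $U$ in its lower-envelope form $U(\mathbf v)=\min_{\hat x}\sum_{a=0}^{M-1}v_a\,\Lambda(a,\hat x)$ as a minimum of finitely many linear functionals whose coefficient vectors have entries in $[0,\Lambda_{\max}]$, the elementary inequality $|\min_i\alpha_i-\min_i\beta_i|\le\max_i|\alpha_i-\beta_i|$ and Hölder's inequality yield, for all $\mathbf v,\mathbf v'\in\mathbb{R}^M$,
\begin{equation*}
|U(\mathbf v)-U(\mathbf v')|\le\max_{\hat x}\Big|\sum_{a=0}^{M-1}(v_a-v_a')\,\Lambda(a,\hat x)\Big|\le\Lambda_{\max}\,\|\mathbf v-\mathbf v'\|_1 .
\end{equation*}
Using this envelope form, which is valid on all of $\mathbb{R}^M$, is what makes the step legitimate even though $\hat{\mathbf{P}}(X_0\mid y_{-k}^k)$, being the re-weighted plug-in vector $[\mathbf{p}^k(\mathbf{w}^*,\mathbf{y}_{-0}^{(k)})\cdot\Pib^{-1}]\odot\mathbf{f}_{X_0}(y_0)$, need not lie on the probability simplex.

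Applying the Lipschitz bound pointwise with $\mathbf v=\mathbf{P}(X_0\mid Y_{-k}^k)$ and $\mathbf v'=\hat{\mathbf{P}}(X_0\mid Y_{-k}^k)$ and taking expectations, the displayed right-hand side is at most $\Lambda_{\max}\cdot\mathbb{E}_{P_{x^n}^k\otimes\mathcal{C}}\big[\|\mathbf{P}(X_0\mid Y_{-k}^k)-\hat{\mathbf{P}}(X_0\mid Y_{-k}^k)\|_1\big]$, and it remains to bound this by $\epsilon^*$. This is precisely Lemma~\ref{lemma_assumption2_to_15} with the densities $p(y_{-k}^k)$ and $p(\mathbf{y}_{-0}^{(k)})$ taken to be the ones induced by $P_{x^n}^k\otimes\mathcal{C}$: in that proof the input law enters only through these densities, which cancel after~(\ref{eq:lem1_1}) and integrate to $1$ in the final line, so the bound carries over verbatim for the empirical law (for which $\mathbf{w}^*$ is assumed to satisfy Assumption~2). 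I expect the main point needing care to be exactly this transfer --- pinning down that the relevant law of $Y_{-k}^k$ in Lemma~\ref{lemma_assumption2_to_15} is the $P_{x^n}^k\otimes\mathcal{C}$-law, so that Assumption~2 is the correct hypothesis --- together with the minor bookkeeping that $U$ stays $\Lambda_{\max}$-Lipschitz off the simplex; the Lipschitz estimate itself and the Jensen step are routine.
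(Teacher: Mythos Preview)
Your argument is correct and, like the paper, reduces everything to the $L_1$ bound of Lemma~\ref{lemma_assumption2_to_15}; your remark that the input law enters that lemma only through densities that cancel or integrate to one is exactly the right justification for invoking it under $P_{x^n}^k\otimes\mathcal C$. The pointwise step, however, is genuinely different. You bound $|U(\mathbf v)-U(\mathbf v')|$ directly via the $\Lambda_{\max}$-Lipschitz property of the lower envelope $U(\mathbf v)=\min_{\hat x}\sum_a v_a\,\Lambda(a,\hat x)$, a one-line estimate valid on all of $\mathbb{R}^M$. The paper instead unpacks the difference as the excess risk $\sum_a \mathbf P(a\mid y)\bigl[\Lambda\bigl(a,\mathcal B(\hat{\mathbf P})\bigr)-\Lambda\bigl(a,\mathcal B(\mathbf P)\bigr)\bigr]$, subtracts the nonpositive quantity $\sum_a \hat{\mathbf P}(a\mid y)\bigl[\Lambda\bigl(a,\mathcal B(\hat{\mathbf P})\bigr)-\Lambda\bigl(a,\mathcal B(\mathbf P)\bigr)\bigr]$ (nonpositive by the definition of the Bayes response), and only then bounds by $\Lambda_{\max}\|\mathbf P-\hat{\mathbf P}\|_1$. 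Your route is shorter and sidesteps any worry about $\hat{\mathbf P}$ leaving the simplex; the paper's route buys a reusable intermediate inequality, namely~\eqref{eqn_r}, $\bigl|\mathbb E\bigl[\Lambda\bigl(X_0,\mathcal B(\hat{\mathbf P})\bigr)\bigr]-\mathbb E\bigl[\Lambda\bigl(X_0,\mathcal B(\mathbf P)\bigr)\bigr]\bigr|\le\Lambda_{\max}\,\mathbb E\|\mathbf P-\hat{\mathbf P}\|_1$, which is exactly the form later needed in Lemma~\ref{lemma_ex_quan} and Theorem~\ref{theorem_1}, where the relevant object is the expected loss under the true posterior rather than the value of $U$ at $\hat{\mathbf P}$.
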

\paragraph{Proof:}
First, we identify that
    \begin{align}
        & \left|D_{x^n}^k - \mathbb{E}_{P_{x^n}^k \otimes \mathcal{C}} \Big[U\Big(\hat{\mathbf{P}}({X_{0} | Y_{-k}^{k})}\Big)\Big]\right| \nonumber\\
        =& \left|\mathbb{E}_{P_{x^n}^k \otimes \mathcal{C}} \Big[U\Big(\textbf{P}({X_{0} | Y_{-k}^{k})}\Big) - \mathbb{E}_{P_{x^n}^k \otimes \mathcal{C}} [U\Big(\hat{\mathbf{P}}({X_{0} | Y_{-k}^{k})}\Big)\Big]\right| \label{eq:e_def}\\
      = & \int_{\mathbb{R}^{2k+1}} {\mathbb{E}}  \Big[\Lb \Big(X, \mathcal{B}(\hat{\mathbf{P}}(X_{0}|y_{-k}^{k}))\Big)
    -\Lb\Big(X,\mathcal{B}(\mathbf{P}(X_{0}|y_{-k}^{k}))\Big) \Big| y_{-k}^{k} \Big]
\cdot p\left(y_{-k}^{k}\right)  d y_{-k}^{k}, \label{eqn_e}
\end{align}
where the $\mathbb{E}(\cdot)$ in (\ref{eqn_e}) stands for the conditional expectation with respect to $\mathbf{P}(X_0|y_{-k}^k)$, which is the posterior distribution induced from $P_{x^n}^k \otimes \mathcal{C}$. Now, the following inequality holds for each $y_{-k}^{k}$:
\begin{align}
& {\mathbb{E}}\Big[\Lb \Big(X, \mathcal{B}(\hat{\mathbf{P}}(X_{0}|y_{-k}^{k}))\Big)
    -\Lb\Big(X, \mathcal{B}(\mathbf{P}(X_{0}|y_{-k}^{k}))\Big)\Big| y_{-k}^{k} \Big]\\
  = & \sum_{a=0}^{M-1}\mathbf{P}(X_0=a | y_{-k}^{k})
    \cdot \Big[\Lb\Big(a, \mathcal{B}(\hat{\mathbf{P}}(X_{0}|y_{-k}^{k}))\Big)
    -\Lb\Big(a, \mathcal{B}(\mathbf{P}(X_{0}|y_{-k}^{k}))\Big)\Big]\\
    \leq & \sum_{a=0}^{M-1} \Big( \mathbf{P}(X_0=a | y_{-k}^{k})-\hat{\mathbf{P}}(X_0=a | y_{-k}^{k}) \Big) 
    \cdot \Big[\Lb\Big(a, \mathcal{B}(\hat{\mathbf{P}}(X_{0}|y_{-k}^{k}))\Big)
    -\Lb\Big(a, \mathcal{B}(\mathbf{P}(X_{0}|y_{-k}^{k}))\Big)\Big] \label{eqn_f} \\
     \leq & \sum_{a=0}^{M-1} \Big|\Big( \mathbf{P}(X_0=a | y_{-k}^{k})-\hat{\mathbf{P}}(X_0=a | y_{-k}^{k}) \Big) \Big| \cdot \Lambda_{\max } = \Lambda_{\max } \cdot \| \textbf{P}(X_{0}|y_{-k}^{k})-\hat{\mathbf{P}}(X_{0}|y_{-k}^{k})\|_{1},
    \end{align}
in which (\ref{eqn_f}) follows from the definition of the Bayes response. Therefore, 
 \begin{align}
     (\ref{eqn_e}) \leq \, &  \Lambda_{\max}\cdot \int_{\mathbb{R}^{2k+1}}  \| \textbf{P}(X_{0}|y_{-k}^{k})-\hat{\mathbf{P}}(X_{0}|y_{-k}^{k})\|_{1}
\cdot p\left(y_{-k}^{k}\right)  d y_{-k}^{k}
=  \Lambda_{\max}\cdot {\mathbb{E}} \|\textbf{P}(X_{0}|Y_{-k}^{k})-\hat{\mathbf{P}}(X_{0}|Y_{-k}^{k})\|_{1} \label{eqn_r}\\
\leq& \Lambda_{\max}\cdot \epsilon^*  \label{eqn_s},
 \end{align}
in which (\ref{eqn_s}) follows 
from Lemma \ref{lemma_assumption2_to_15}. Note that difference between two expected loss of denoiser based on the Bayes response is bounded with the difference between two probability vectors.\ \ \qed


\clearpage

\begin{lemma}
Consider $\hat{\mathbf{P}}(X_{0} | Y_{-k}^{k})$ and $\epsilon^*$ defined above
and define $\hat{\mathbf{P}}^{\delta}=\mathcal{R}_{\delta}(\hat{\mathbf{P}})$ as in Lemma 3. Then,

    \begin{equation*}
\Big|\mathbb{E}_{P_{x^n}^k \otimes \mathcal{C}}\Big[U\Big(\hat{\mathbf{P}}(X_{0} | Y_{-k}^{k})\Big) - U\Big(\hat{\mathbf{P}}^{\delta}\Big(X_{0} | Y_{-k}^{k})\Big)\Big]\Big|
\leq 2\Lambda_{\max}\cdot \Big( \epsilon^*+ \frac{M\cdot \delta}{4}\Big).   
    \end{equation*}
\label{lemma_ex_quan}
\end{lemma}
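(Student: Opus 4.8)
The plan is to insert the true posterior $\mathbf{P}(X_0\mid Y_{-k}^k)$ — the one induced by $P_{x^n}^k\otimes\mathcal{C}$, exactly as in Lemma~\ref{lemma_ex_est} — between $\hat{\mathbf{P}}$ and $\hat{\mathbf{P}}^{\delta}$, and then bound the two resulting pieces using the three lemmas already proved. Concretely, by the triangle inequality the left-hand side is at most
\begin{equation*}
\Big|\mathbb{E}_{P_{x^n}^k\otimes\mathcal{C}}\big[U(\hat{\mathbf{P}})-U(\mathbf{P})\big]\Big|+\Big|\mathbb{E}_{P_{x^n}^k\otimes\mathcal{C}}\big[U(\mathbf{P})-U(\hat{\mathbf{P}}^{\delta})\big]\Big|,
\end{equation*}
so it suffices to handle each term separately.

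For the first term I would simply invoke Lemma~\ref{lemma_ex_est}: since $D_{x^n}^k=\mathbb{E}_{P_{x^n}^k\otimes\mathcal{C}}[U(\mathbf{P}(X_0\mid Y_{-k}^k))]$ by its definition (Eq.~(15) of the paper), that lemma states precisely that this term is at most $\Lambda_{\max}\cdot\epsilon^*$. For the second term I would re-run the chain of inequalities in the proof of Lemma~\ref{lemma_ex_est} with $\hat{\mathbf{P}}^{\delta}$ playing the role of $\hat{\mathbf{P}}$; the only property of the estimate used there is that $\mathcal{B}(\cdot)$ is its Bayes response (the step labelled (\ref{eqn_f})), and this remains valid for the quantized vector. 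That yields
\begin{equation*}
\Big|\mathbb{E}_{P_{x^n}^k\otimes\mathcal{C}}\big[U(\mathbf{P})-U(\hat{\mathbf{P}}^{\delta})\big]\Big|\leq\Lambda_{\max}\cdot\mathbb{E}\big\|\mathbf{P}(X_0\mid Y_{-k}^k)-\hat{\mathbf{P}}^{\delta}(X_0\mid Y_{-k}^k)\big\|_1.
\end{equation*}

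To close the argument I would control this last $L_1$ distance by inserting $\hat{\mathbf{P}}$ once more: $\|\mathbf{P}-\hat{\mathbf{P}}^{\delta}\|_1\le\|\mathbf{P}-\hat{\mathbf{P}}\|_1+\|\hat{\mathbf{P}}-\hat{\mathbf{P}}^{\delta}\|_1\le\|\mathbf{P}-\hat{\mathbf{P}}\|_1+M\delta/2$ by Lemma~\ref{lemma_quan}; taking expectations and applying Lemma~\ref{lemma_assumption2_to_15} gives $\mathbb{E}\|\mathbf{P}-\hat{\mathbf{P}}^{\delta}\|_1\le\epsilon^*+M\delta/2$. Adding the two bounds produces $\Lambda_{\max}\epsilon^*+\Lambda_{\max}\big(\epsilon^*+M\delta/2\big)=2\Lambda_{\max}\big(\epsilon^*+M\delta/4\big)$, which is the claimed inequality. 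I do not anticipate any genuine obstacle; the one point deserving a moment's care is checking that the computation of Lemma~\ref{lemma_ex_est} does not rely on its input being an exactly normalized probability vector, so that it legitimately applies to $\hat{\mathbf{P}}^{\delta}=\mathcal{R}_{\delta}(\hat{\mathbf{P}})$ (whose components need not sum to one after rounding).
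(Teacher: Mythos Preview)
Your proposal is correct and follows essentially the same route as the paper: the paper also inserts $\mathbf{P}$ via the triangle inequality, bounds each piece by $\Lambda_{\max}$ times the corresponding expected $L_1$ distance (re-using the computation from Lemma~\ref{lemma_ex_est} for $\hat{\mathbf{P}}^{\delta}$), then applies a second triangle inequality together with Lemmas~\ref{lemma_assumption2_to_15} and~\ref{lemma_quan}. Your closing caveat is well placed and harmless: the Bayes-response inequality~(\ref{eqn_f}) only uses that $\mathcal{B}(\hat{\mathbf{P}}^{\delta})$ minimizes $\sum_a \hat{\mathbf{P}}^{\delta}(a)\,\mathbf{\Lambda}(a,\cdot)$, which does not require the components of $\hat{\mathbf{P}}^{\delta}$ to sum to one.
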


\begin{proof}
We have the following chain of inequalities:
\begin{align}
& \Big|\mathbb{E}_{P_{x^n}^k \otimes \mathcal{C}} \Big[U\Big(\hat{\mathbf{P}}\left(X_{0} | Y_{-k}^{k}\right)\Big)\Big] 
- \mathbb{E}_{P_{x^n}^k \otimes \mathcal{C}}\Big[U\Big(\hat{\mathbf{P}}^{\delta}\left(X_{0} | Y_{-k}^{k}\right)\Big)\Big]\Big|\nonumber\\
\leq \, & \Big|\mathbb{E}_{P_{x^n}^k \otimes \mathcal{C}}\Big[U\Big(\hat{\mathbf{P}}\left(X_{0} | Y_{-k}^{k}\right)\Big)\Big] 
- \mathbb{E}_{P_{x^n}^k \otimes \mathcal{C}}\Big[U\Big(\textbf{P}\left(X_{0} | Y_{-k}^{k}\right)\Big)\Big]\Big| \nonumber \label{ineq:tri_1} \\
& + \Big|\mathbb{E}_{P_{x^n}^k \otimes \mathcal{C}}\Big[U\Big(\textbf{P}\left(X_{0} | Y_{-k}^{k}\right)\Big)\Big] - \mathbb{E}_{P_{x^n}^k \otimes \mathcal{C}}\Big[U\Big(\hat{\mathbf{P}}^{\delta}\left(X_{0} | Y_{-k}^{k}\right)\Big)\Big]\Big|\\
\leq \, & \Lambda_{\max}\cdot {\mathbb{E}} \|\textbf{P}(X_{0}|Y_{-k}^{k})-\hat{\mathbf{P}}(X_{0}|Y_{-k}^{k})\|_{1}
+ \Lambda_{\max}\cdot {\mathbb{E}} \|\textbf{P}(X_{0}|Y_{-k}^{k})-\hat{\mathbf{P}}^{\delta}(X_{0}|Y_{-k}^{k})\|_{1} \label{ineq:tri_2} \\
\leq \, & 2\Lambda_{\max}\cdot {\mathbb{E}} \|\textbf{P}(X_{0}|Y_{-k}^{k})-\hat{\mathbf{P}}(X_{0}|Y_{-k}^{k})\|_{1}
+ \Lambda_{\max}\cdot {\mathbb{E}} \|\hat{\textbf{P}}(X_{0}|Y_{-k}^{k})-\hat{\mathbf{P}}^{\delta}(X_{0}|Y_{-k}^{k})\|_{1} \label{ineq:tri_3}\\
\leq \, & 2\Lambda_{\max}\cdot {\mathbb{E}} \|\textbf{P}(X_{0}|Y_{-k}^{k})-\hat{\mathbf{P}}(X_{0}|Y_{-k}^{k})\|_{1}
+ \frac{\Lambda_{\max}\cdot M\cdot \delta}{2} \label{ineq:tri_4}\\
\leq \, &2\Lambda_{\max}\cdot \epsilon^*+ \frac{\Lambda_{\max}\cdot M\cdot \delta}{2},\label{ineq:tri_5}
    \end{align}
    in which (\ref{ineq:tri_1}) follows from the triangular inequality, (\ref{ineq:tri_2}) follows from (\ref{eqn_r}) and replacing $\hat{\mathbf{P}}$ with $\hat{\mathbf{P}}^{\delta}$ in (\ref{eq:e_def}), (\ref{ineq:tri_3}) follows from applying the triangular inequality once more, (\ref{ineq:tri_4}) follows from Lemma 3, and (\ref{ineq:tri_5}) follows from (\ref{eqn_s}). Note that probability vectors $\mathbf{P}, \hat{\mathbf{P}}$ in Lemma \ref{lemma_ex_est} replaced $\hat{\mathbf{P}}, \hat{\mathbf{P}}^{\delta}$ in Lemma \ref{lemma_ex_quan} respectively. \ \ \qed
\end{proof}

\begin{lemma}
For every $n \geq 1$, $x^{n} \in \mathcal{A}^{n} $, $\epsilon >0$ and measurable $g_k: \mathbb{R}^{2 k+1} \rightarrow \mathcal{A}$, 
\begin{equation*}
  \Pr\Big(\Big| \frac{1}{n-2k} \sum_{i=k+1}^{n-k} \Lb\Big(x_{i}, g_k\left(Y_{i-k}^{i+k}\right)\Big)
-\mathbb{E}_{P_{x^n}^{k} \otimes \mathcal{C}} \Big[\Lb\Big(X_{0}, g_k\left(Y_{-k}^{k}\right)\Big)\Big] \Big|> \epsilon \Big)
\leq 2(2k+1)\exp \left(-\frac{2\left(n-2k\right)}{\left(2k+1\right)} \epsilon^{2} \cdot \frac{1}{\Lambda_{\max }^{2}}\right).
\end{equation*}
\label{lemma_hoeffding_general}
\end{lemma}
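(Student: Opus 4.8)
The approach is to rewrite both quantities as averages over the positions $i$, then split those positions into $2k+1$ groups inside which the relevant observations are non-overlapping — hence independent — and finally invoke the ordinary Hoeffding bound on each group. To set this up, abbreviate $V_i \triangleq \Lb\big(x_i, g_k(Y_{i-k}^{i+k})\big)$ for $k+1\le i\le n-k$; since $0\le \Lb \le \Lambda_{\max}$, each $V_i$ takes values in an interval of length at most $\Lambda_{\max}$, and $V_i$ is a measurable function of the block $Y_{i-k}^{i+k}$ alone (recall $x^n$ and $g_k$ are deterministic). Because $P_{x^n}^k$ is precisely the empirical distribution of the clean $(2k+1)$-blocks $\{x_{i-k}^{i+k}\}_{i=k+1}^{n-k}$ and $\mathcal{C}$ operates symbolwise, a direct unfolding of the definitions gives the bookkeeping identity $\mathbb{E}_{P_{x^n}^k\otimes\mathcal{C}}\big[\Lb(X_0,g_k(Y_{-k}^k))\big]=\frac{1}{n-2k}\sum_{i=k+1}^{n-k}\mathbb{E}[V_i]$, so that the expression inside the probability equals the centered average $\frac{1}{n-2k}\sum_{i=k+1}^{n-k}\big(V_i-\mathbb{E}[V_i]\big)$.

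The key step is to handle the dependence among the $V_i$. As $x^n$ is fixed, the channel outputs $Y_1,\dots,Y_n$ are mutually independent, so for $i\neq j$ with $|i-j|\ge 2k+1$ the blocks $Y_{i-k}^{i+k}$ and $Y_{j-k}^{j+k}$ use disjoint coordinates and $V_i\perp V_j$. I would therefore partition $\{k+1,\dots,n-k\}$ into the $2k+1$ residue classes $S_r=\{\,i : i\equiv r \pmod{2k+1}\,\}$, $r=0,\dots,2k$: any two indices in the same $S_r$ differ by a nonzero multiple of $2k+1$, so $\{V_i : i\in S_r\}$ is a family of $|S_r|$ independent, $\Lambda_{\max}$-range random variables, with $|S_r|\ge (n-2k)/(2k+1)$ (taking floors where necessary).

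Finally, I would glue the groups back together. With $\bar V_r\triangleq\frac{1}{|S_r|}\sum_{i\in S_r}V_i$, the centered average is the convex combination $\sum_{r=0}^{2k}\frac{|S_r|}{n-2k}\big(\bar V_r-\mathbb{E}[\bar V_r]\big)$, so its absolute value can exceed $\epsilon$ only if $|\bar V_r-\mathbb{E}[\bar V_r]|>\epsilon$ for some $r$; a union bound over the $2k+1$ classes, followed by Hoeffding's inequality applied within each class, gives $\Pr(\,|\bar V_r-\mathbb{E}[\bar V_r]|>\epsilon\,)\le 2\exp\!\big(-2|S_r|\,\epsilon^2/\Lambda_{\max}^2\big)\le 2\exp\!\big(-\tfrac{2(n-2k)}{2k+1}\cdot\tfrac{\epsilon^2}{\Lambda_{\max}^2}\big)$, and summing over $r$ produces the stated bound. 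The only genuine obstacle is the dependence created by the overlapping length-$(2k+1)$ windows; the residue-class decomposition is exactly what reduces the problem to $2k+1$ independent Hoeffding estimates, which is also the source of the $2k+1$ factor in the exponent and of the $2(2k+1)$ prefactor.
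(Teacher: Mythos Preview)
Your proposal is correct and follows essentially the same route as the paper: partition the index set $\{k+1,\dots,n-k\}$ into $2k+1$ residue classes modulo $2k+1$ so that the $V_i$ within each class depend on disjoint blocks of $Y$'s and are therefore independent, apply the union bound across the classes, and invoke Hoeffding's inequality inside each class. The bookkeeping identity linking $\mathbb{E}_{P_{x^n}^k\otimes\mathcal{C}}[\cdot]$ to $\frac{1}{n-2k}\sum_i\mathbb{E}[V_i]$ and the convex-combination justification for the union bound that you spell out are left implicit in the paper, but the substance of the argument is identical.
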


\begin{proof}
We have the following:
    \begin{align}
      & \Pr\Big( \Big| \frac{1}{n-2k} \sum_{i=k+1}^{n-k} \Lb\Big(x_{i}, g_k\left(Y_{i-k}^{i+k}\right)\Big)
    -\mathbb{E}_{P_{x^n}^{k} \otimes \mathcal{C}} \Big[\Lb\Big(X_{0}, g_k\left(Y_{-k}^{k}\right)\Big)\Big] \Big|> \epsilon \Big)\nonumber\\
    = \, & 2\cdot \Pr\Big( \frac{1}{n-2k} \sum_{m=0}^{2k} \sum_{\substack{i \in\{k+1, \ldots, n-k\},\\ \lceil(i-m) /(2 k+1)\rceil =(i-m) /(2 k+1)}} 
    \Lb\Big(x_{i}, g_k\left(Y_{i-k}^{i+k}\right)\Big)
    -\mathbb{E}_{P_{x^n}^{k} \otimes \mathcal{C}} \Big[\Lb\Big(X_{0}, g_k\left(Y_{-k}^{k}\right)\Big)\Big] > \epsilon \Big)\nonumber\\
    \leq \, & 2(2k+1)\cdot \Pr\Big( \frac{2k+1}{n-2k}  \sum_{\substack{i \in\{k+1, \ldots, n-k\},\\ \lceil i /(2 k+1)\rceil =i /(2 k+1)}} 
    \Lb\Big(x_{i}, g_k\left(Y_{i-k}^{i+k}\right)\Big)
    -\mathbb{E}_{P_{x^n}^{k} \otimes \mathcal{C}} \Big[\Lb\Big(X_{0}, g_k\left(Y_{-k}^{k}\right)\Big)\Big] > \epsilon \Big)\label{lem6_1}\\
    \leq \, & 2(2k+1)\exp \left(-\frac{2\left(n-2k\right)}{\left(2k+1\right)} \epsilon^{2} \cdot \frac{1}{\Lambda_{\max }^{2}}\right).\label{lem6_2}
    \end{align}
 Note that if $|i-j|>2k$, $\Lb(x_{i}, g_k(Y_{i-k}^{i+k}))$ is independent from $\Lb(x_{j}, g_k(Y_{j-k}^{j+k}))$. (\ref{lem6_1}) follows  from the union bound, and (\ref{lem6_2}) follows from the fact that $\Lb(x_{i}, g_k(Y_{i-k}^{i+k}))- \mathbb{E}_{P_{x^n}^{k} \otimes \mathcal{C}} [\Lb(X_{0}, g_k(Y_{-k}^{k}))]$ is a zero-mean, bounded, independent random variable, and the Hoeffding's inequality.
Thus, for every $k$th-order sliding window denoiser, difference between empirical loss and expected loss is vanishing with high probability.\ \qed
\end{proof}

\begin{lemma}
Let $\mathcal{F}_{\delta}^{k}$ denote the set of $\mathcal{A}^{2k+1}$-dimensional vecotrs with components in $[0,1]$ that are integer multiples of $\delta$. Note that $\hat{\mathbf{P}}^{\delta} \in \mathcal{F}_{\delta}^{k}$. Also, let $\mathcal{G}_{\delta}^{k}=\{\mathcal{B}(\mathbf{P})\}_{\mathbf{P} \in \mathcal{F}_{\delta}^{k}}$ be the class of $k$-th order sliding window denoiser defined by computing the Bayes response with respect to $\mathbf{P} \in \mathcal{F}_{\delta}^{k}$. Then, 
for every $n \geq 1$, $x^{n} \in \mathcal{A}^{n} $,  $\epsilon >0$ and $\mathcal{B}(\hat{\mathbf{P}}^{\delta}) \in \mathcal{G}_{\delta}^{k}$,
\begin{equation*}
\Pr\Big(\Big| L_{\hat{X}_{\text{NN}}^{\delta}}\left(x^{n}, Y^{n}\right)
-\mathbb{E}_{P_{x^n}^{k} \otimes \mathcal{C}} \Big[\Lb\Big(X_{0}, \mathcal{B}(\hat{\mathbf{P}}^{\delta}(X_0|Y_{-k}^{k}))\Big)\Big] \Big|> \epsilon \Big)
\leq \Big[\frac{1}{\delta}+1\Big]^{M}\cdot 2(2k+1)\exp \left(-\frac{2\left(n-2k\right)}{\left(2k+1\right)} \epsilon^{2} \cdot \frac{1}{\Lambda_{\max }^{2}}\right).
\end{equation*}
\label{lemma_NN_hoeffding}
\end{lemma}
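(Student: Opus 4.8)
The plan is to convert the single-hypothesis concentration bound of Lemma~\ref{lemma_hoeffding_general} into a \emph{uniform} bound over the finite class $\mathcal{G}_\delta^k$, and then to exploit the membership $\mathcal{B}(\hat{\mathbf{P}}^\delta)\in\mathcal{G}_\delta^k$ supplied by the quantization. A uniform bound is needed rather than a direct application of Lemma~\ref{lemma_hoeffding_general}, because $\hat{X}_{\text{NN}}^\delta$ depends on the data through $\mathbf{w}^*$ and so cannot itself play the role of a fixed measurable denoiser $g_k$; however, after the quantizer $\mathcal{R}_\delta$ its per-window decision is always the Bayes response to one of finitely many grid vectors, so the induced sliding-window rule can never leave $\mathcal{G}_\delta^k$.

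Carrying this out, I would first bound the size of the class. A vector in $\mathcal{F}_\delta^k$ has $M=|\mathcal{A}|$ coordinates, each lying in $[0,1]$ and equal to an integer multiple of $\delta$, so each coordinate admits at most $\lfloor 1/\delta\rfloor+1\le 1/\delta+1$ values; hence $|\mathcal{F}_\delta^k|\le(1/\delta+1)^M$, and since $\mathcal{G}_\delta^k$ is the image of $\mathcal{F}_\delta^k$ under $\mathcal{B}(\cdot)$ we also get $|\mathcal{G}_\delta^k|\le(1/\delta+1)^M$; moreover each $g_k\in\mathcal{G}_\delta^k$ is measurable, being a composition of the (measurable) posterior estimate, the quantizer, and a measurable selection of the Bayes response. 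Writing $A_{g_k}$ for the event $\big|L_{g_k}(x^n,Y^n)-\mathbb{E}_{P_{x^n}^{k}\otimes\mathcal{C}}[\Lb(X_0,g_k(Y_{-k}^{k}))]\big|>\epsilon$, the inclusion $\mathcal{B}(\hat{\mathbf{P}}^\delta)\in\mathcal{G}_\delta^k$ means the event in the statement lies in $\bigcup_{g_k\in\mathcal{G}_\delta^k}A_{g_k}$, so the union bound followed by Lemma~\ref{lemma_hoeffding_general} applied to each \emph{fixed} $g_k\in\mathcal{G}_\delta^k$ gives
\[
\Pr\Big(\big|L_{\hat{X}_{\text{NN}}^{\delta}}(x^{n},Y^{n})-\mathbb{E}_{P_{x^n}^{k}\otimes\mathcal{C}}[\Lb(X_0,\mathcal{B}(\hat{\mathbf{P}}^{\delta}(X_0|Y_{-k}^{k})))]\big|>\epsilon\Big)\le\sum_{g_k\in\mathcal{G}_\delta^k}\Pr(A_{g_k})\le|\mathcal{G}_\delta^k|\cdot 2(2k+1)\exp\Big(-\frac{2(n-2k)}{2k+1}\cdot\frac{\epsilon^{2}}{\Lambda_{\max}^{2}}\Big),
\]
and substituting $|\mathcal{G}_\delta^k|\le(1/\delta+1)^M$ yields the claim.

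The step I expect to be the main obstacle is exactly this reduction to a \emph{fixed, data-independent} finite class: one must ensure that the union ranges over a $\mathcal{G}_\delta^k$ that does not itself depend on $Y^n$, even though the particular member $\mathcal{B}(\hat{\mathbf{P}}^\delta)$ realized on a given sample path does. This is precisely the purpose of the $\delta$-quantization in Lemma~\ref{lemma_quan} --- the network's posterior estimate is snapped onto the fixed grid $\mathcal{F}_\delta^k$ before the Bayes response is computed --- and it is also why Lemma~\ref{lemma_hoeffding_general} must be invoked with each $g_k$ held deterministic and measurable rather than with the random object $\mathcal{B}(\hat{\mathbf{P}}^\delta)$. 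Once the reduction is in place, the tail shape, the $2(2k+1)$ prefactor from the dependency-breaking decomposition, and the $1/\Lambda_{\max}^{2}$ scaling are all inherited verbatim from Lemma~\ref{lemma_hoeffding_general}, and the only new ingredient is the covering number $|\mathcal{G}_\delta^k|$.
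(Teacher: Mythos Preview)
Your proposal is correct and follows essentially the same route as the paper: inclusion of the random-rule event into the union over the finite class $\mathcal{G}_\delta^k$, a union bound, Lemma~\ref{lemma_hoeffding_general} for each fixed $g_k$, and the crude cardinality bound $|\mathcal{G}_\delta^k|\le(1/\delta+1)^M$. Your discussion of why the quantization is what licenses the reduction to a fixed, data-independent class is more explicit than the paper's one-word appeal to ``uniform convergence,'' and your reading of the grid vectors as $M$-dimensional (rather than the $\mathcal{A}^{2k+1}$-dimensional stated in the lemma) is the one consistent with the bound and with the paper's own remark contrasting $[1/\delta+1]^{M}$ with $[1/\delta+1]^{M^{2k+1}}$.
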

\begin{proof}
We have
\begin{align}
    & \Pr\Big( \Big| L_{\hat{X}_{\text{NN}}^{\delta}}\left(x^{n}, Y^{n}\right)
    -\mathbb{E}_{P_{x^n}^{k} \otimes \mathcal{C}} \Big[\Lb\Big(X_{0}, \mathcal{B}(\hat{\mathbf{P}}^{\delta}(X_0|Y_{-k}^{k}))\Big)\Big] \Big|> \epsilon \Big)\nonumber\\
        = \, & \Pr\Big(\Big| \frac{1}{n-2k} \sum_{i=k+1}^{n-k} \Lb\Big(x_{i}, \mathcal{B}(\hat{\mathbf{P}}^{\delta}(X_i|Y_{i-k}^{i+k}))\Big)
        -\mathbb{E}_{P_{x^n}^{k} \otimes \mathcal{C}} \Big[\Lb\Big(X_{0}, \mathcal{B}(\hat{\mathbf{P}}^{\delta}(X_0|Y_{-k}^{k}))\Big)\Big] \Big|> \epsilon \Big)\nonumber\\
    \leq \, & \Pr\Big(\max_{g_k^* \in \mathcal{G}_{\delta}^{k}}\Big| \frac{1}{n-2k} \sum_{i=k+1}^{n-k} \Lb\Big(x_{i}, g_k^*\left(Y_{i-k}^{i+k}\right)\Big)
        -\mathbb{E}_{P_{x^n}^{k} \otimes \mathcal{C}} \Big[\Lb\Big(X_{0}, g_k^*\left(Y_{i-k}^{i+k}\right)\Big)\Big] \Big|> \epsilon \Big)\label{eq:lem7_1}\\
    \leq \, & \Big| \mathcal{G}_{\delta}^{k} \Big|\cdot 2(2k+1)\exp \left(-\frac{2\left(n-2k\right)}{\left(2k+1\right)} \epsilon^{2} \cdot \frac{1}{\Lambda_{\max }^{2}}\right)\label{eq:lem7_2}\\
    \leq \, &\Big[\frac{1}{\delta}+1\Big]^{M}\cdot 2(2k+1)\exp \left(-\frac{2\left(n-2k\right)}{\left(2k+1\right)} \epsilon^{2} \cdot \frac{1}{\Lambda_{\max }^{2}}\right),\label{eq:lem7_3}
    \end{align}
in which (\ref{eq:lem7_1}) follows from considering the uniform convergence, (\ref{eq:lem7_2}) follows from the union bound, and (\ref{eq:lem7_3}) follows from the crude upper bound on the cardinality $| \mathcal{G}_{\delta}^{k}|$.
Note that the window size $k$ in the superscript of upper bound for the cardinality ($[\frac{1}{\delta}+1]^{M}$) is removed compared to that of Gen-DUDE ($ [\frac{1}{\delta}+1]^{M^{2k+1}}$).
The distinction between them follows from difference in modeling where Gen-CUDE tries to directly model the marginal posterior distribution with neural network rather than the joint posterior of $(2k+1)$-tuple. \ \qed
\end{proof}

\setcounter{theorem}{0}


Now, we prove our main theorm.
\begin{theorem}
\label{theorem_1}
Consider $\epsilon^*$ in Lemma \ref{lemma_assumption2_to_15}. Then, 
for all $k,n\geq 1$, $\delta>0$, and $\epsilon > \Lambda_{\max}\cdot ( 3\epsilon^* + \frac{M \cdot \delta}{2})$, and for all $x^n$,
\begin{align*}
    &\Pr\Big(| L_{\hat{X}_{\text{NN}}^{n,\delta}}\left(x^{n}, Y^{n}\right)
- D_{x^n}^k |> \epsilon  \Big)
\leq C_1(k,\delta,M)
\exp \Big(-\frac{2\left(n-2k\right)}{\left(2k+1\right)} C_2(\epsilon, \epsilon^*, \Lambda_{\max}, M, \delta) \Big),\nonumber
\end{align*}
in which $C_1(k,\delta,M)\triangleq2(2k+1)[\frac{1}{\delta}+1]^{M}$ and 
$C_2(\epsilon, \epsilon^*, \Lambda_{\max}, M, \delta)\triangleq ( \epsilon-\Lambda_{\max}\cdot ( 3\epsilon^* + \frac{M \cdot \delta}{2}) )^2 \cdot \frac{1}{\Lambda_{\max }^{2}}$.

\end{theorem}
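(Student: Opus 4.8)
The plan is to insert two intermediate quantities between $L_{\hat{X}_{\text{NN}}^{n,\delta}}(x^{n}, Y^{n})$ and $D_{x^n}^k$ and then apply the triangle inequality, so that all of the randomness in $Y^n$ is localized in a single fluctuation term while the remaining pieces form a deterministic bias. Writing
\begin{align*}
T_1 &\triangleq \Big| L_{\hat{X}_{\text{NN}}^{n,\delta}}(x^{n}, Y^{n}) - \mathbb{E}_{P_{x^n}^{k} \otimes \mathcal{C}}\big[U(\hat{\mathbf{P}}^{\delta}(X_0|Y_{-k}^{k}))\big] \Big| , \\
T_2 &\triangleq \Big| \mathbb{E}_{P_{x^n}^{k} \otimes \mathcal{C}}\big[U(\hat{\mathbf{P}}^{\delta}(X_0|Y_{-k}^{k}))\big] - \mathbb{E}_{P_{x^n}^{k} \otimes \mathcal{C}}\big[U(\hat{\mathbf{P}}(X_0|Y_{-k}^{k}))\big] \Big| , \\
T_3 &\triangleq \Big| \mathbb{E}_{P_{x^n}^{k} \otimes \mathcal{C}}\big[U(\hat{\mathbf{P}}(X_0|Y_{-k}^{k}))\big] - D_{x^n}^k \Big| ,
\end{align*}
we have $\big| L_{\hat{X}_{\text{NN}}^{n,\delta}}(x^{n}, Y^{n}) - D_{x^n}^k \big| \le T_1 + T_2 + T_3$. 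Lemma~\ref{lemma_ex_est} gives $T_3 \le \Lambda_{\max}\,\epsilon^*$, and Lemma~\ref{lemma_ex_quan}, applied with the roles of $(\mathbf{P},\hat{\mathbf{P}})$ there played by $(\hat{\mathbf{P}},\hat{\mathbf{P}}^{\delta})$, gives $T_2 \le 2\Lambda_{\max}\big(\epsilon^* + \tfrac{M\delta}{4}\big)$. Adding the two, $T_2 + T_3 \le \Lambda_{\max}\big(3\epsilon^* + \tfrac{M\delta}{2}\big)$, which is exactly the deterministic offset appearing in the hypothesis on $\epsilon$.

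Since $T_2+T_3$ is a nonrandom quantity depending only on $x^n$, the channel, and $\delta$, the event $\big\{\, | L_{\hat{X}_{\text{NN}}^{n,\delta}}(x^{n}, Y^{n}) - D_{x^n}^k | > \epsilon \,\big\}$ is contained in $\big\{\, T_1 > \epsilon - \Lambda_{\max}(3\epsilon^* + \tfrac{M\delta}{2}) \,\big\}$, and the hypothesis $\epsilon > \Lambda_{\max}(3\epsilon^* + \tfrac{M\delta}{2})$ makes this threshold strictly positive. The crucial point for bounding $T_1$ is that the denoiser actually deployed, $\mathcal{B}(\hat{\mathbf{P}}^{\delta})$, is itself a function of $Y^n$ (through the learned parameter $\mathbf{w}^*$ and the rounding $\mathcal{R}_{\delta}$), so Hoeffding's inequality may not be applied to it as a fixed rule; instead one uses that $\mathcal{B}(\hat{\mathbf{P}}^{\delta})$ always belongs to the finite class $\mathcal{G}_{\delta}^{k}$, which is precisely the setting of Lemma~\ref{lemma_NN_hoeffding}. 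Applying Lemma~\ref{lemma_NN_hoeffding} with $\epsilon$ there replaced by $\epsilon - \Lambda_{\max}(3\epsilon^* + \tfrac{M\delta}{2})$ yields
\begin{equation*}
\Pr\!\Big(T_1 > \epsilon - \Lambda_{\max}\big(3\epsilon^* + \tfrac{M\delta}{2}\big)\Big) \le \Big[\tfrac{1}{\delta}+1\Big]^{M} 2(2k+1)\exp\!\Big(-\tfrac{2(n-2k)}{2k+1}\big(\epsilon - \Lambda_{\max}(3\epsilon^* + \tfrac{M\delta}{2})\big)^{2}\tfrac{1}{\Lambda_{\max}^{2}}\Big),
\end{equation*}
and the right-hand side equals $C_1(k,\delta,M)\exp\!\big(-\tfrac{2(n-2k)}{2k+1}\, C_2(\epsilon,\epsilon^*,\Lambda_{\max},M,\delta)\big)$. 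Chaining the event inclusion with this estimate completes the proof.

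The only genuinely delicate step is the last one: recognizing that the data-dependent sliding-window denoiser induced by the trained network belongs to the fixed finite class $\mathcal{G}_{\delta}^{k}$, which lets us invoke the uniform-convergence bound of Lemma~\ref{lemma_NN_hoeffding} (at the cost of the $[\tfrac{1}{\delta}+1]^{M}$ factor in $C_1$) rather than the pointwise Hoeffding estimate of Lemma~\ref{lemma_hoeffding_general}. Everything else is bookkeeping: the triangle-inequality split, the additivity $T_2 + T_3 \le \Lambda_{\max}(3\epsilon^* + \tfrac{M\delta}{2})$, and the substitution of the reduced threshold into Lemma~\ref{lemma_NN_hoeffding}. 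One should also keep $\mathcal{B}(\cdot)$ and $U(\cdot)$ consistently evaluated on the same quantized vector $\hat{\mathbf{P}}^{\delta}$ in both $T_1$ and $T_2$, but this matching is already built into the lemma statements, so no additional argument is needed.
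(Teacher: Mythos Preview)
Your proposal is correct and follows essentially the same route as the paper: the same triangle-inequality split into $T_1,T_2,T_3$, the same use of Lemma~\ref{lemma_ex_est} and Lemma~\ref{lemma_ex_quan} to absorb $T_2+T_3$ into the deterministic offset $\Lambda_{\max}(3\epsilon^*+\tfrac{M\delta}{2})$, and the same application of Lemma~\ref{lemma_NN_hoeffding} to the residual $T_1$. One minor remark: Lemma~\ref{lemma_ex_quan} is already stated for the pair $(\hat{\mathbf{P}},\hat{\mathbf{P}}^{\delta})$, so no role-swapping is needed when invoking it for $T_2$.
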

\paragraph{Proof of theorem 1:} We utilize all the Lemmas given above to prove the theorem. We have
\begin{align}
    & \Pr\Big( \Big| L_{\hat{X}_{\text{NN}}^{\delta}}\left(x^{n}, Y^{n}\right)
        - D_{x^n}^k \Big|> \epsilon  \Big)\nonumber\\
        = & \Pr\Big(\Big| \frac{1}{n-2k} \sum_{i=k+1}^{n-k} \Lb\Big(x_{i}, \mathcal{B}(\hat{\mathbf{P}}^{\delta}(X_i|Y_{i-k}^{i+k}))\Big)
        -\mathbb{E}_{P_{x^n}^{k} \otimes \mathcal{C}} \Big[\Lb\Big(X_{0}, \mathcal{B}(\mathbf{P}(X_0|Y_{-k}^{k}))\Big)\Big] \Big|
        > \epsilon  \Big) \label{eq:thm1-1} \\
    \leq \, & \Pr\Big(\Big| \frac{1}{n-2k} \sum_{i=k+1}^{n-k} \Lb\Big(x_{i}, \mathcal{B}(\hat{\mathbf{P}}^{\delta}(X_i|Y_{i-k}^{i+k}))\Big)
        -\mathbb{E}_{P_{x^n}^{k} \otimes \mathcal{C}} \Big[\Lb\Big(X_{0}, \mathcal{B}(\hat{\mathbf{P}}^{\delta}(X_0|Y_{-k}^{k}))\Big)\Big] \Big|\nonumber\\
        & +  \Big| \mathbb{E}_{P_{x^n}^{k} \otimes \mathcal{C}} \Big[\Lb\Big(X_{0}, \mathcal{B}(\hat{\mathbf{P}}^{\delta}(X_0|Y_{-k}^{k}))\Big)\Big]
        -\mathbb{E}_{P_{x^n}^{k} \otimes \mathcal{C}} \Big[\Lb\Big(X_{0}, \mathcal{B}(\hat{\mathbf{P}}(X_0|Y_{-k}^{k}))\Big)\Big] \Big| \nonumber \\
        & + \Big| \mathbb{E}_{P_{x^n}^{k} \otimes \mathcal{C}} \Big[\Lb\Big(X_{0}, \mathcal{B}(\hat{\mathbf{P}}(X_0|Y_{-k}^{k}))\Big)\Big] 
        -\mathbb{E}_{P_{x^n}^{k} \otimes \mathcal{C}} \Big[\Lb\Big(X_{0}, \mathcal{B}(\mathbf{P}(X_0|Y_{-k}^{k}))\Big)\Big] \Big|
        >\epsilon \Big) \label{ineq:thm1-1}\\ 
        \leq \,  & \Pr\Big(\Big| \frac{1}{n-2k} \sum_{i=k+1}^{n-k} \Lb\Big(x_{i}, \mathcal{B}(\hat{\mathbf{P}}^{\delta}(X_i|Y_{i-k}^{i+k}))\Big)
        -\mathbb{E}_{P_{x^n}^{k} \otimes \mathcal{C}} \Big[\Lb\Big(X_{0}, \mathcal{B}(\hat{\mathbf{P}}^{\delta}(X_0|Y_{-k}^{k}))\Big)\Big] \Big| >  \epsilon-\Lambda_{\max}\cdot ( 3\epsilon^* + \frac{M \cdot \delta}{2}) \Big) \label{ineq:thm1-2}\\
    \leq \, & \Big[\frac{1}{\delta}+1\Big]^{M}\cdot 2(2k+1)\exp \left(-\frac{2\left(n-2k\right)}{\left(2k+1\right)} \cdot \Big( \epsilon-\Lambda_{\max}\cdot ( 3\epsilon^* + \frac{M \cdot \delta}{2})\Big)^{2} \cdot \frac{1}{\Lambda_{\max }^{2}}\right)\label{ineq:thm1-3})\\
    = \, & C_1(k,\delta,M) \exp \Big(-\frac{2\left(n-2k\right)}{\left(2k+1\right)} C_2(\epsilon, \epsilon^*, \Lambda_{\max}, M, \delta) \Big),
    \end{align}
where (\ref{eq:thm1-1}) follows from the definition of $ L_{\hat{X}_{\text{NN}}^{\delta}}(x^{n}, Y^{n})$ and $D_{x^n}^k$, (\ref{ineq:thm1-1}) follows from triangle inequality, (\ref{ineq:thm1-2}) follows from applying Lemma \ref{lemma_ex_quan} and Lemma \ref{lemma_ex_est}, and (\ref{ineq:thm1-3}) follows from Lemma \ref{lemma_NN_hoeffding}. Thus, we proved the theorem. \ \qed


\clearpage
\section{Noise Channel Densities}\label{appendix:channelgraph}
Here, we show the noisy channel density $\{f_x(y)\}_{x\in\mathcal{O}}$ used for the experiments in Section 5.2 and Section 5.3 of the paper. Figure \ref{prob_dist_syn} shows the channel densities for the synthetic data experiments in Section 5.2, and Figure \ref{prob_dist_DNA} shows the channel densities for the 454 and Ion Torrent data experiments in Section 5.3. 

\begin{figure*}[!ht]
\centering
\subfigure[][$|\mathcal{A}|=2$]{
\includegraphics[width=0.40\columnwidth]{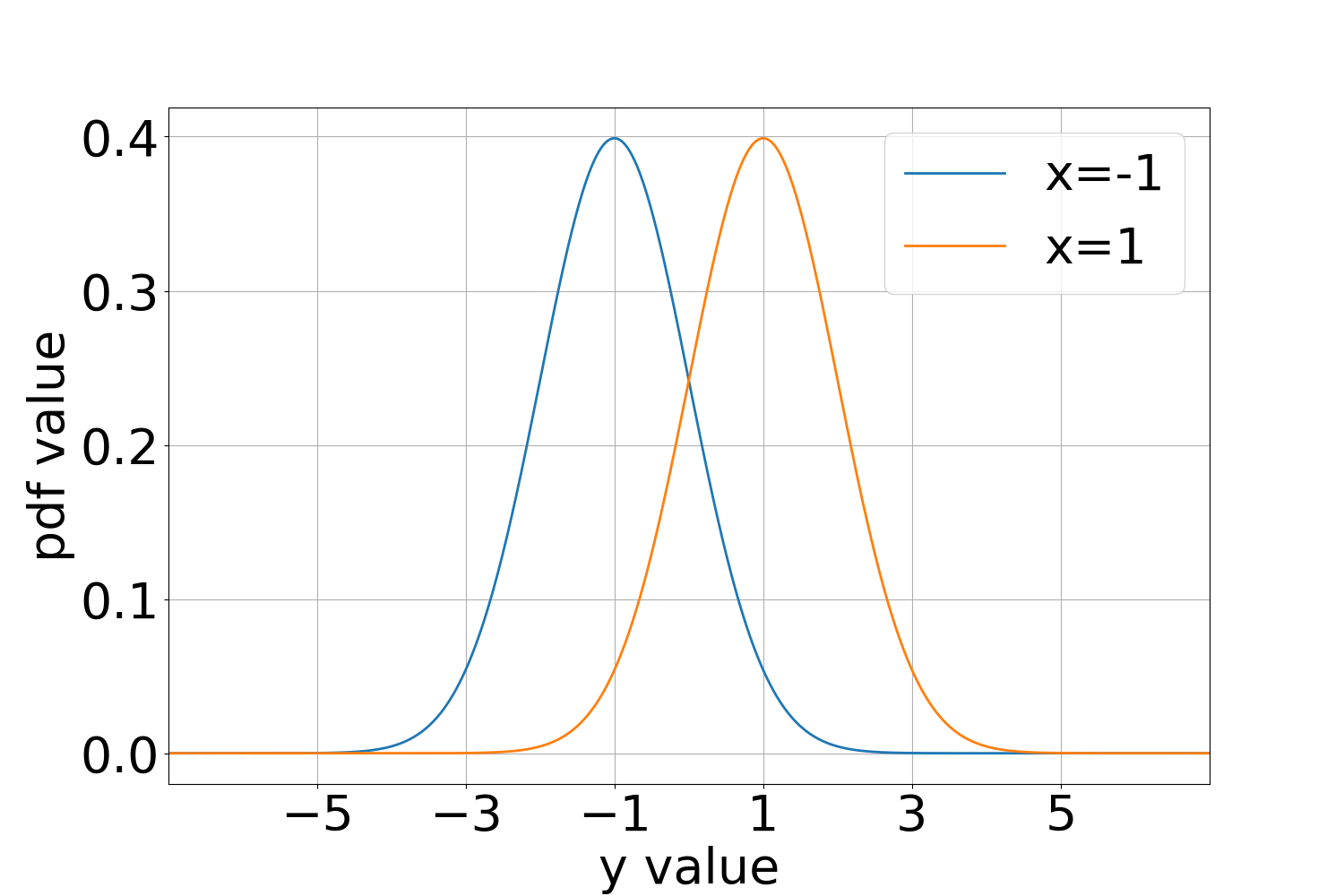}
\label{binary_pdf}
}
\subfigure[][$|\mathcal{A}|=4$]{
\includegraphics[width=0.40\columnwidth]{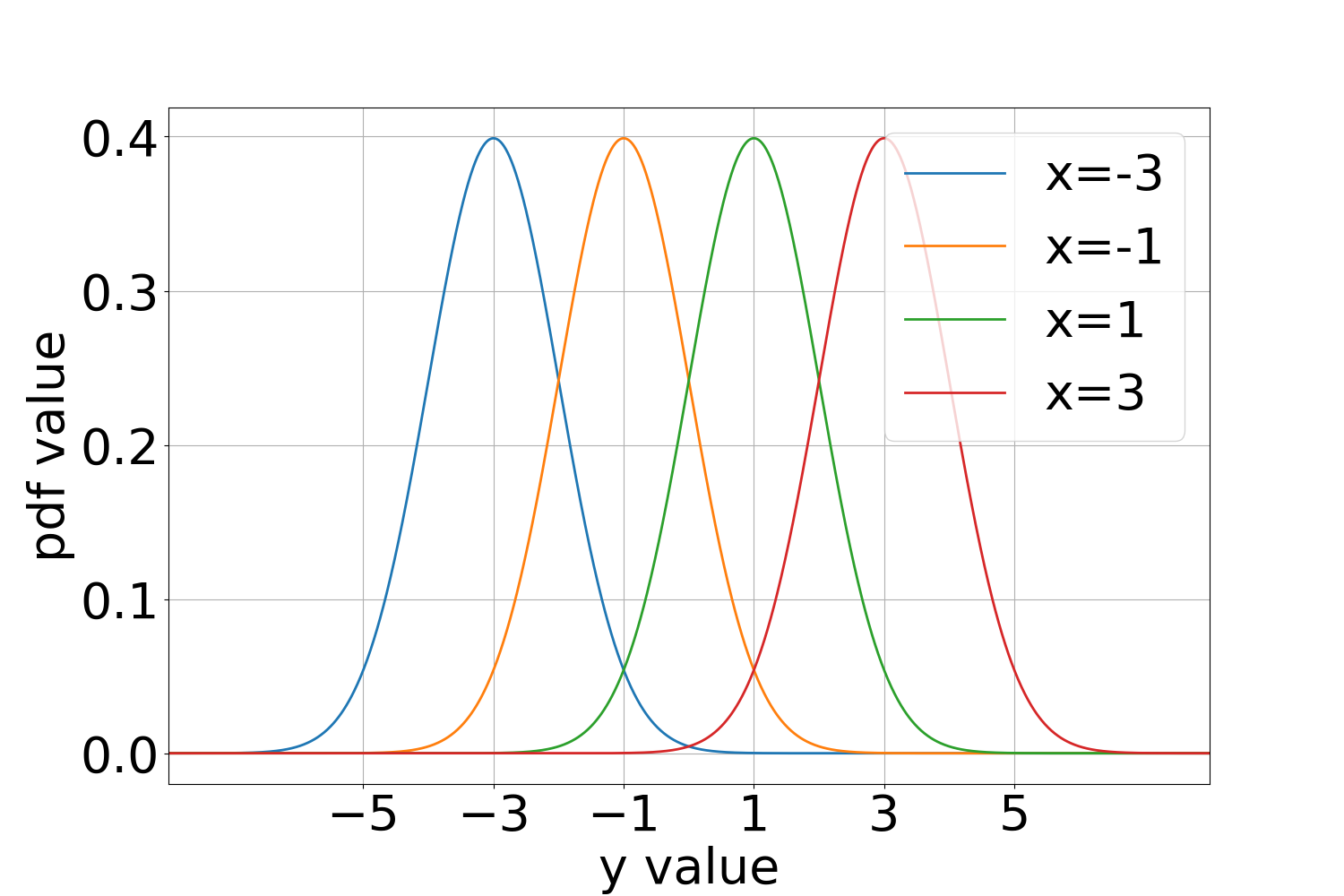}
\label{tetra_pdf}
}
\subfigure[][$|\mathcal{A}|=10$]{
\includegraphics[width=0.50\columnwidth]{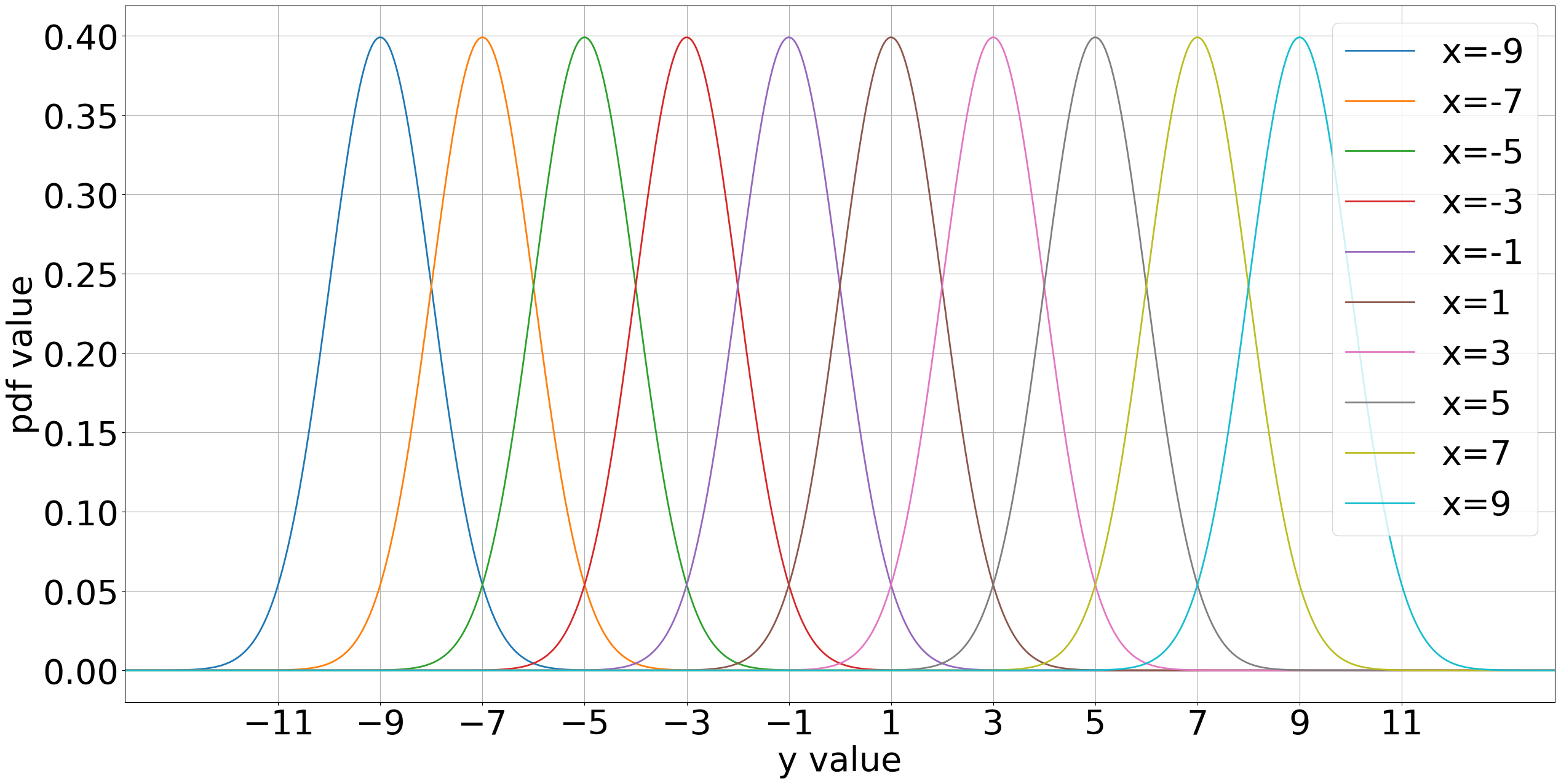}
\label{deca_pdf}
}
\caption{\label{prob_dist_syn}
Noisy channel densities used for the synthetic data experiments. 
}
\end{figure*}

\begin{figure*}[!ht]
\centering
\subfigure[][454 Pyrosequencing]{
\includegraphics[width=0.45\columnwidth]{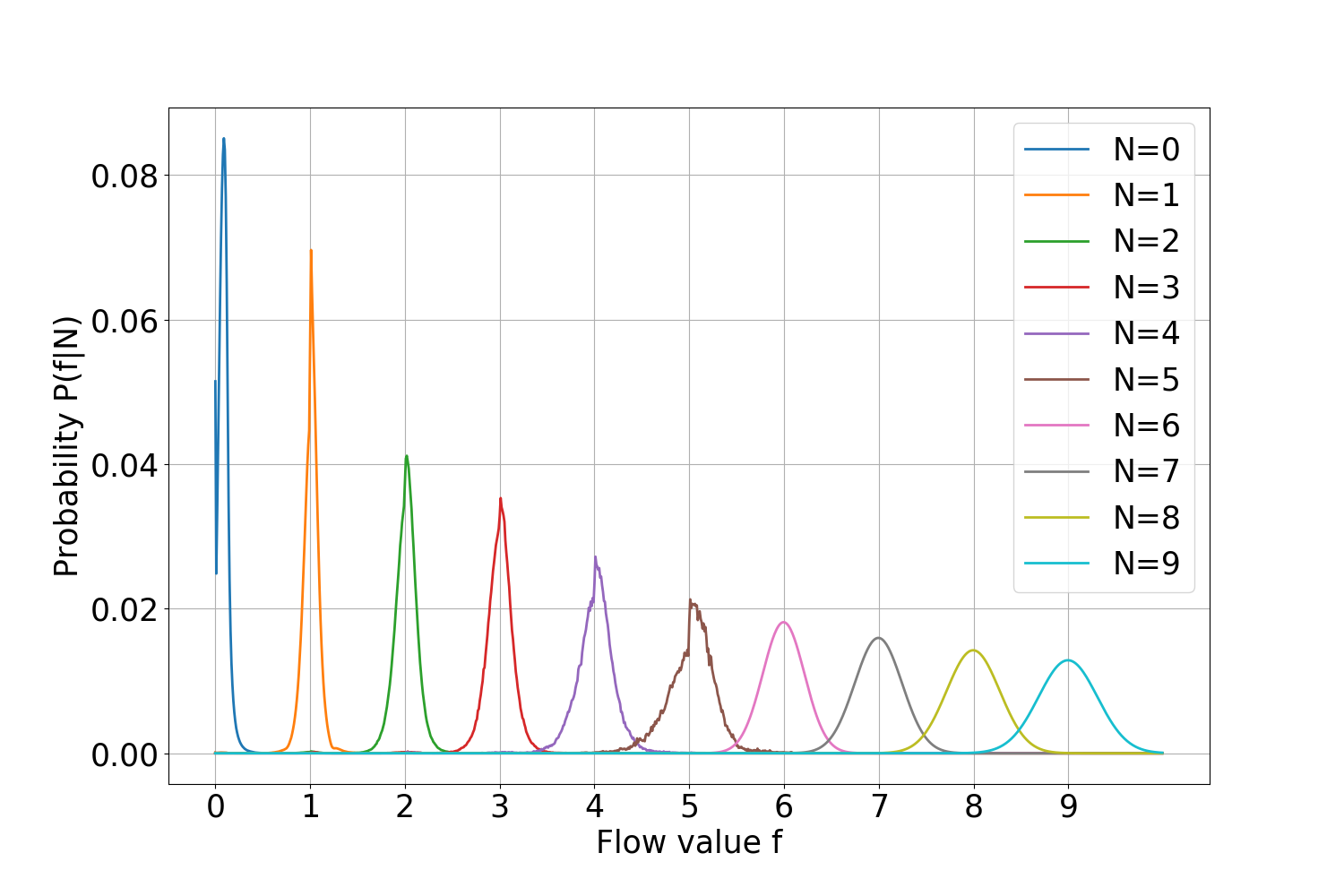}
\label{454_pdf}
}
\subfigure[][Ion Torrent]{
\includegraphics[width=0.45\columnwidth]{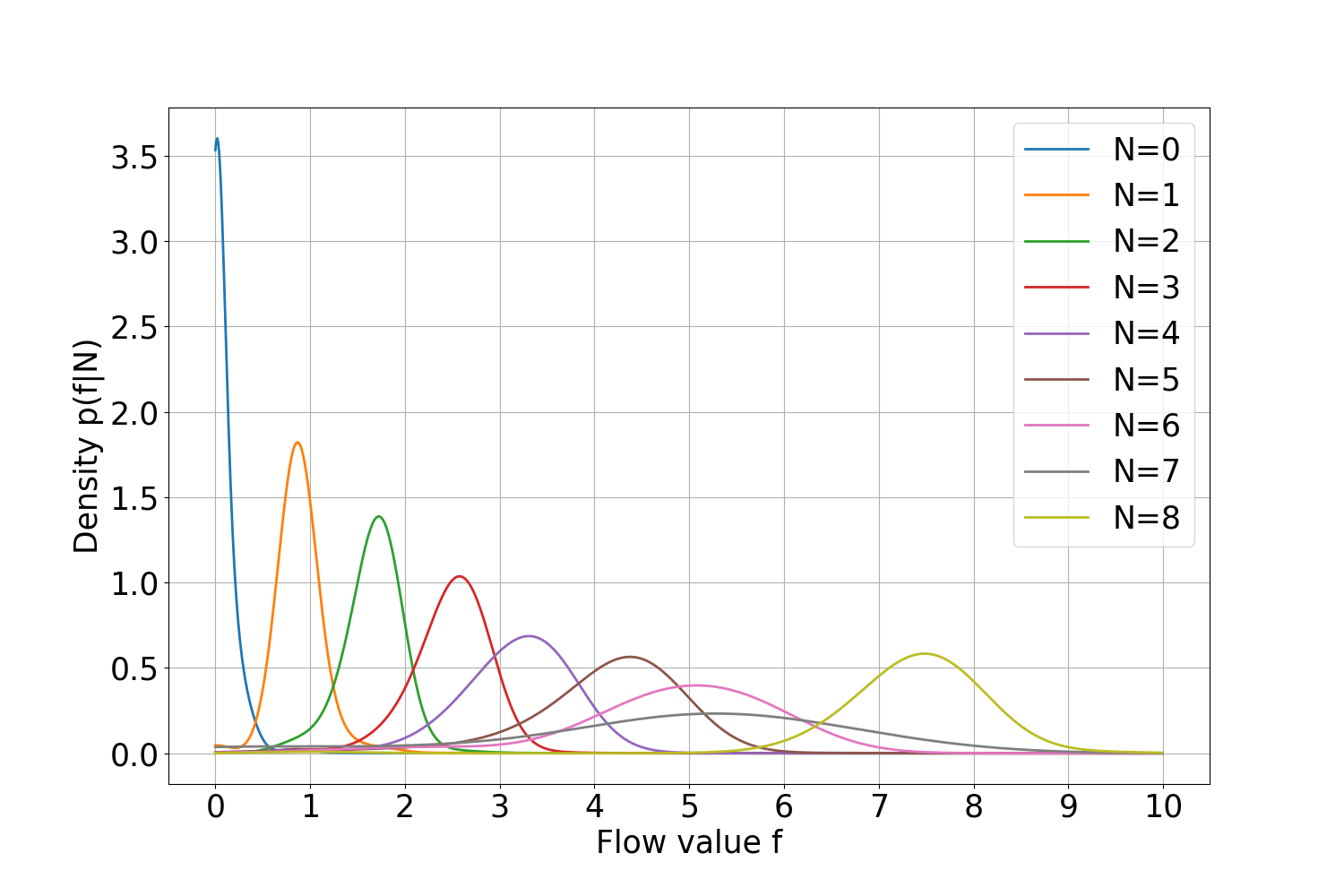}
\label{ion_pdf}
}
\caption{\label{prob_dist_DNA}
Probability ensities of the flowgram-values for the homopolymer lengths in each DNA sequencer.  For Ion Torrent, we estimated channel density using Gaussian kernel density estimation with bandwidth=0.6 on the separated holdout dataset.
}
\end{figure*}

\section{Normalized Error Rate Graph for DNA Experiments}\label{appendix:biterrorrate}
Figure \ref{Error_DNA} shows the denoising performance measured by the Hamming loss. Note the similarity score in the paper is computed after converting the integer-valued denoised sequence (homopolymer length) back to a DNA sequence. We observe the error patterns are similar to those in Figure 2 of the paper. 
\begin{figure*}[!ht]
\centering
\subfigure[][454 Pyrosequencing]{
\includegraphics[width=0.45\columnwidth]{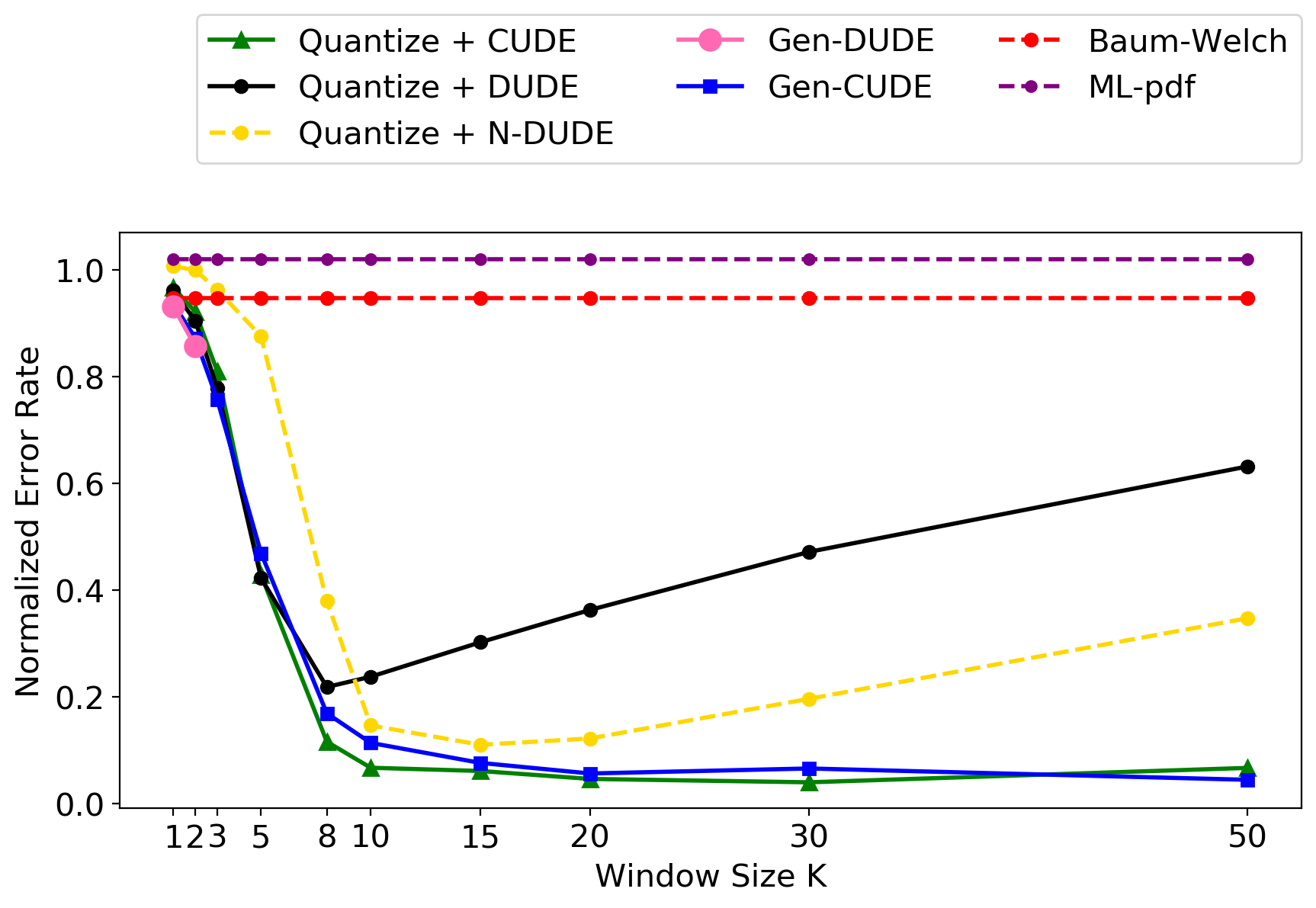}

}
\subfigure[][Ion Torrent]{
\includegraphics[width=0.45\columnwidth]{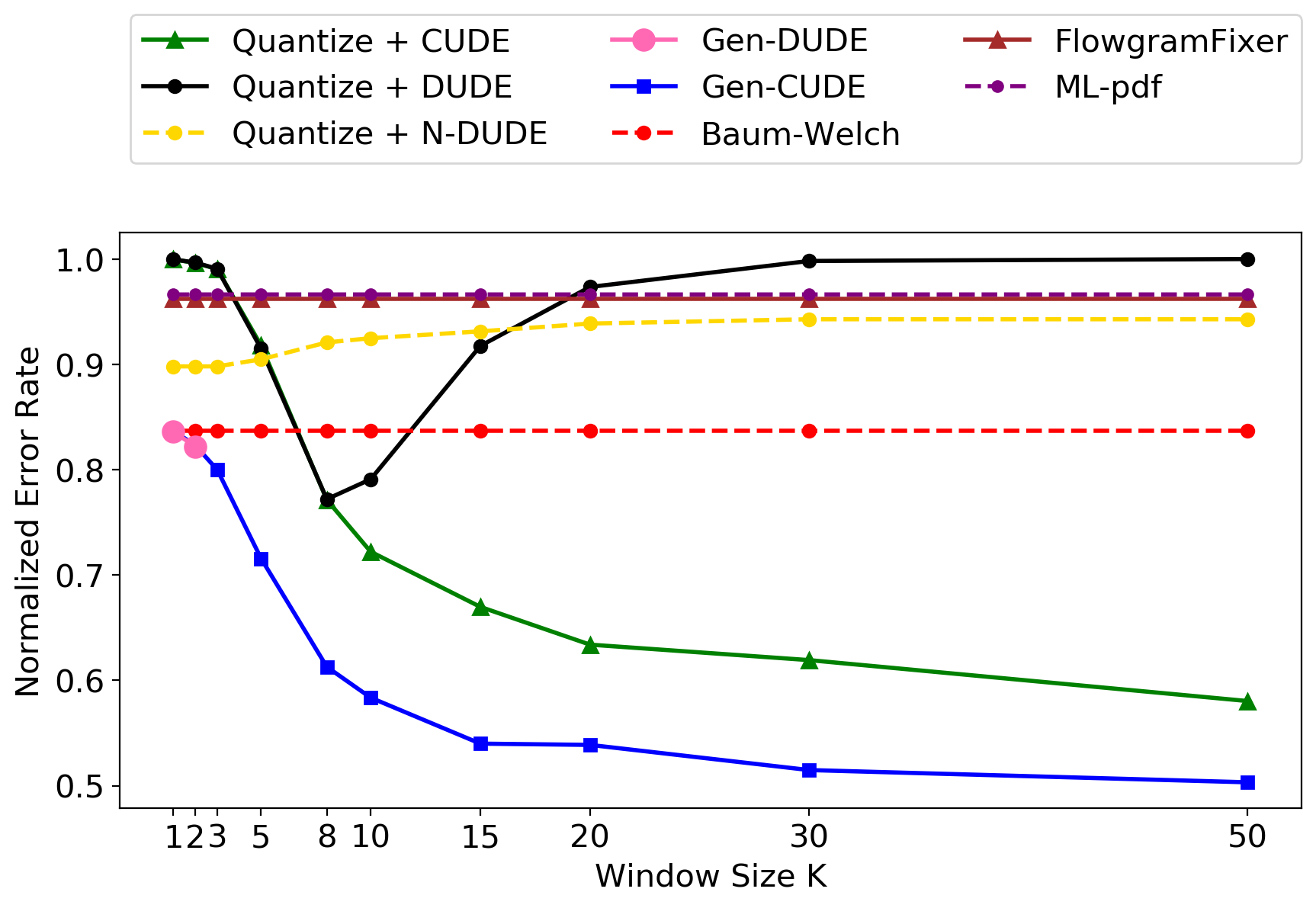}

}
\caption{\label{Error_DNA}
Normalized error rate for DNA source data.
}
\end{figure*}

%

\clearpage
\section{Error Rate Graph for Randomized Quantizers}\label{appendix:RandomizedQuantizer}

We note that the quantizer $Q(\cdot)$ can be freely selected for Gen-CUDE as long as the induced DMC, $\Pi$, is invertible. To show the small effect of the quantizer to the final denoising performance, we designed two additional experiments for the $|\mathcal{A}|=4$ case of Figure 1(a) in the paper. As described in the first paragraph of Section 5.2, the source symbol was encoded as $\{+3,+1,-1,-3\}$ and the decision boundaries of the original $Q(\cdot)$ was $\{-2,0,+2\}$. 

\begin{figure*}[h]
\centering

\subfigure[][Normalized Error Rate]{
\includegraphics[width=0.95\columnwidth]{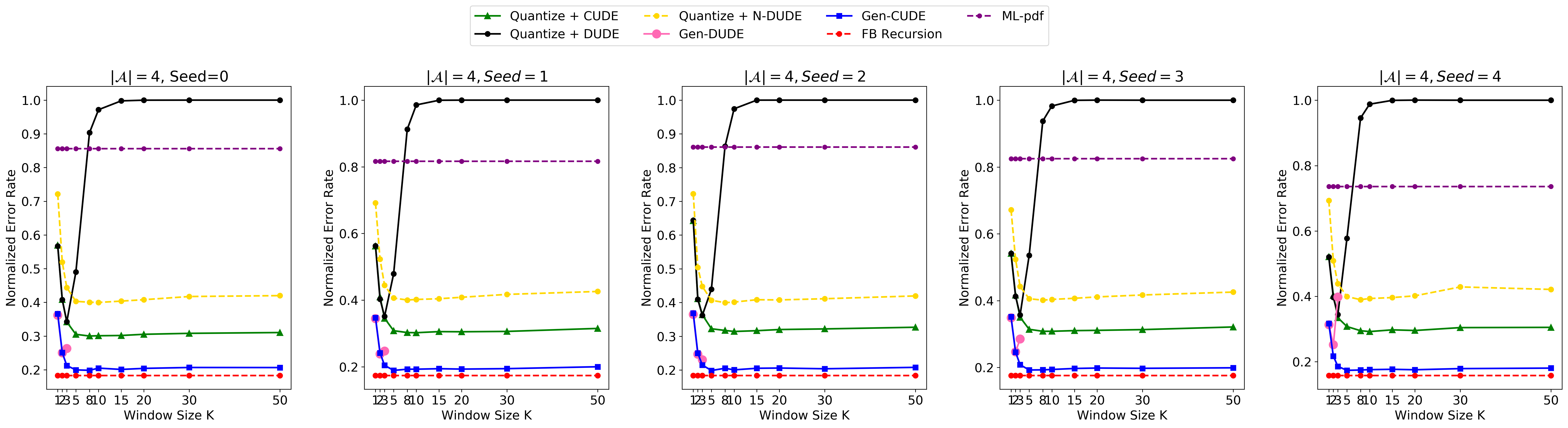}
\label{Each_Error_Q}
}

\subfigure[][Average Error Rate]{
\includegraphics[width=0.90\columnwidth]{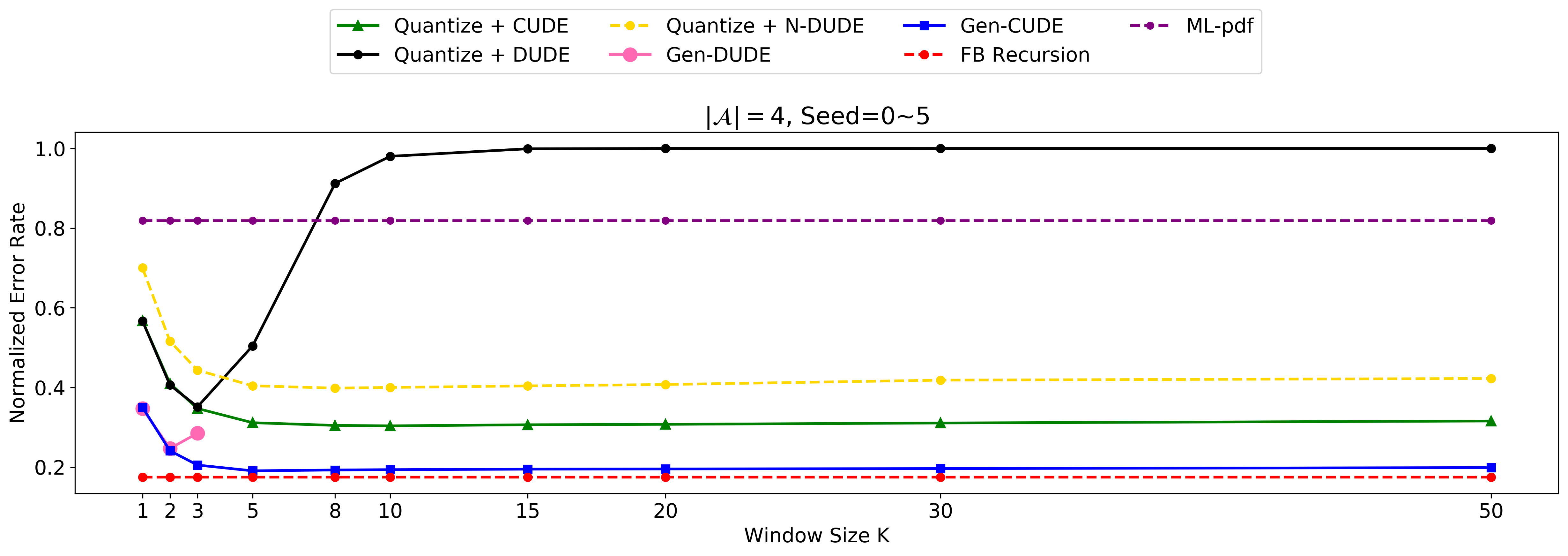}
\label{Avg_Error_Q}
}
\caption{Error Rate for Five Randomized Quantizers}
\end{figure*}

In Figure \ref{Each_Error_Q}, we show the results of using five randomized quantizers, of which decision boundaries were obtained by uniform sampling from the intervals, $[-3,-1], [-1,+1], [+1,+3]$, respectively. The 5 different resulting quantizers’ decision boundaries were the following:
\begin{itemize}
    \item Seed 0 : $[-1.59, 0.73, 2.09]$
    \item Seed 1 : $[-1.18, 0.27, 2.46]$
    \item Seed 2 : $[-2.29, -0.59, 2.49]$
    \item Seed 3 : $[-1.96, -0.41, 1.12]$
    \item Seed 4 : $[-1.13, 0.81, 1.61]$.
\end{itemize}

The five figures in Figure \ref{Each_Error_Q} show the performance for each quantizer, and Figure \ref{Avg_Error_Q} shows the average error rate of them, which looks quite similar the one shown in Figure 1(a). We can clearly observe that the different quantizers have little effect in the final denoising performance for \texttt{Gen-CUDE}.  In contrast, we observe that \texttt{Gen-DUDE} or \texttt{Quantize+DUDE} have more sensitivity to the choice of the quantizer.

Furthermore, we note that our Gen-CUDE does not require to have the same number of the quantized symbols as the input symbols, either. In such cases, the $\mathbf{\Pi}^{-1}$ can be simply replaced with a pseudo-inverse as long as $\mathbf{\Pi}$ has full row-rank. Figure \ref{Avg_Error_nonsquare_Q} is the result of averaging the performances of using five randomized quantizers, of which decision boundaries are randomly selected from the intervals $[-2.7, -2.3]$, $[-1.7, -1.3]$, $[-0.7, -0.3]$, $[0.3, 0.7]$, $[1.3, 1.7]$, $[2.3, 2.7]$, respectively. (Thus, $Q(\cdot)$ has 7 regions.) The used boundaries are as following:

\begin{figure*}[t]
\centering\includegraphics[width=0.90\columnwidth]{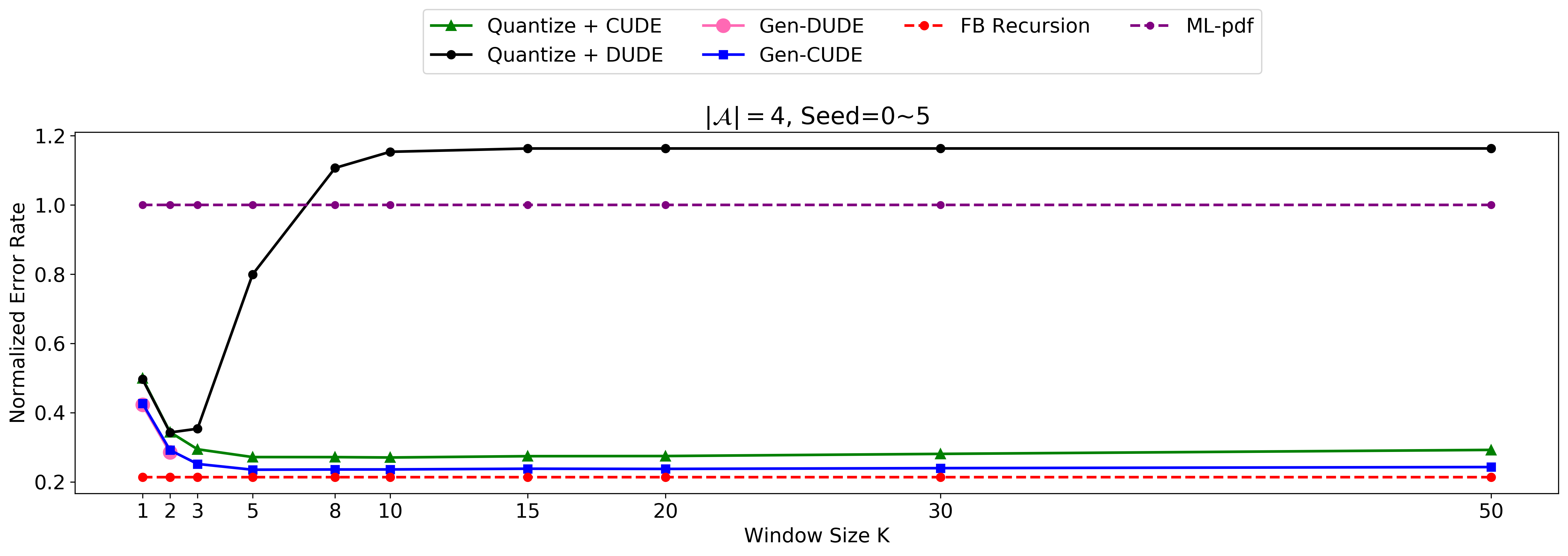}
\caption{Average Error Rate for Non-square Channel Matrix Case \label{Avg_Error_nonsquare_Q}}
\end{figure*}

\begin{itemize}
    \item Seed 0 : $[-2.42, -1.35, -0.48, 0.57, 1.44, 2.36]$
    \item Seed 1 : $[-2.34, -1.45, -0.41, 0.55, 1.55, 2.32]$
    \item Seed 2 : $[-2.56, -1.62, -0.4, 0.59, 1.56, 2.55]$
    \item Seed 3 : $[-2.49, -1.58, -0.68, 0.59, 1.49, 2.51]$
    \item Seed 4 : $[-2.33, -1.34, -0.58, 0.5, 1.67, 2.64].$
\end{itemize}

Again, we see little difference in the performance for \texttt{Gen-CUDE} compared to Figure \ref{Avg_Error_nonsquare_Q} and Figure 1(a) ($|\mathcal{A}|=4$ case) in the manuscript.

\end{document}